\newtheoremstyle{propstyle} 
    {3mm}                    
    {1mm}                    
    {\itshape}                   
    {}                           
    {\scshape}                   
    {.}                          
    {.5em}                       
    {}  
\theoremstyle{propstyle}
\newtheorem{prop}{Proposition}
\theoremstyle{propstyle}
\newtheorem{definition}{Definition}
\theoremstyle{propstyle}
\newtheorem{lemma}{Lemma}
\newsavebox\ideabox
\newenvironment{idea}
  {\begin{equation}
   \begin{lrbox}{\ideabox}
   \begin{minipage}{\dimexpr\columnwidth-2\leftmargini}
   \setlength{\leftmargini}{0pt}%
   \begin{quote}}
  {\end{quote}
   \end{minipage}
   \end{lrbox}\makebox[0pt]{\usebox{\ideabox}}
   \end{equation}}
\newcommand{\ba}{\mathbf{a}}
\newcommand{\bh}{\mathbf{h}}
\newcommand{\bb}{\mathbf{b}}
\newcommand{\bs}{\mathbf{s}}
\newcommand{\bx}{\mathbf{x}}
\newcommand{\by}{\mathbf{y}}
\newcommand{\bS}{\mathbf{S}}
\newcommand{\bz}{\mathbf{z}}
\newcommand{\bW}{\mathbf{W}}
\newcommand{\bG}{\mathbf{G}}
\newcommand{\bL}{\mathbf{L}}
\newcommand{\bI}{\mathbf{I}}
\newcommand{\bV}{\mathbf{V}}
\newcommand{\bX}{\mathbf{X}}
\newcommand{\bB}{\mathbf{B}}
\newcommand{\bfsigma}{\bm{\Sigma}}
\newcommand{\bfzero}{\mathbf{0}}
\newcommand{\bfgamma}{\bm{\gamma}}
\newcommand{\bftheta}{\bm{\theta}}
\newcommand{\bfeta}{\bm{\eta}}
\newcommand{\bfnu}{\bm{\nu}}
\newcommand{\bfepsilon}{\bm{\epsilon}}
\newcommand{\bfOmega}{\bm{\Omega}}
\newcommand{\im}{{i_1,\ldots,i_m}}
\newcommand{\jm}{{j_1,\ldots,j_m}}
\newcommand{\E}{E}
\newcommand{\cov}{cov}
\newcommand{\GP}{GP}
\newcommand{\blockdiag}{blockdiag}
\newcommand{\knots}{\mathcal{Q}}
\newcommand{\knot}{\mathbf{q}}
\newcommand{\normal}{\mathcal{N}}
\newcommand{\order}{\mathcal{O}}
\newcommand{\modu}{T}
\renewcommand{\prec}{\bm{\Lambda}}
\newcommand{\pprec}{\widetilde{\bm{\Lambda}}}
\newcommand{\domain}{\mathcal{D}}
\newcommand{\pp}{\tau}
\newcommand{\ppm}{\widetilde\tau}
\newcommand{\deltam}{\widetilde\delta}
\newcommand{\locs}{\mathcal{S}}
\newcommand{\indep}{\rotatebox[origin=c]{90}{$\models$}}
 \def\eqd{\,{\buildrel d \over =}\,} 
\title{A class of multi-resolution approximations for large spatial datasets}
\date{}
\author{Matthias Katzfuss\thanks{Department of Statistics, Texas A\&M University} \thanks{Corresponding author: \texttt{katzfuss@gmail.com}} \and Wenlong Gong\footnotemark[1]}
\begin{document}

\maketitle

\begin{abstract}
Gaussian processes are popular and flexible models for spatial, temporal, and functional data, but they are computationally infeasible for large datasets. We discuss Gaussian-process approximations that use basis functions at multiple resolutions to achieve fast inference and that can (approximately) represent any spatial covariance structure. We consider two special cases of this multi-resolution-approximation framework, a taper version and a domain-partitioning (block) version. We describe theoretical properties and inference procedures, and study the computational complexity of the methods. Numerical comparisons and an application to satellite data are also provided. 
\end{abstract}

\subsection*{Keywords}
Basis functions; Gaussian process; predictive process; kriging; satellite data; sparsity.

\section{Introduction \label{sec:intro}}

Gaussian processes (GPs) are highly popular models for spatial data, time series, and functions. They are flexible and allow natural uncertainty quantification, but their computational complexity is cubic in the data size. This prohibits GPs from being used directly for the analysis of many modern datasets consisting of a large number of observations, such as satellite remote-sensing data.

Consequently, many approximations or assumptions have been proposed that allow the application of GPs to large spatial datasets. Some of these approaches are most appropriate for capturing fine-scale structure \citep[e.g.,][]{Furrer2006,Kaufman2008}, while others are more capable at capturing large-scale structure \citep[e.g.,][]{Higdon1998,Mardia1998,Wikle1999,Cressie2008,Katzfuss2009,Katzfuss2010,Katzfuss2011}. \citet{Lindgren2011a} proposed an approximation based on viewing a GP with Mat\'ern covariance as the solution to the corresponding stochastic partial differential equation. Vecchia's method and its extensions \citep[e.g.,][]{Vecchia1988,Stein2004,Datta2016,Katzfuss2017a} are discontinuous and assume the so-called screening effect to hold, meaning that any given observation is conditionally independent from other observations given a small subset of (typically, nearby) observations. 

We propose here a class of multi-resolution-approximation ($M$-RA) of GPs, which allows capturing spatial structure at all scales. The $M$-RA framework is based on an orthogonal decomposition of the GP of interest into processes at multiple resolutions by iteratively applying the predictive process \citep{Quinonero-Candela2005,Banerjee2008}. The process at each resolution has an equivalent representation as a weighted sum of spatial basis functions. For increasing resolution, the number of functions increases while their scale decreases. Unlike other multi-resolution models or wavelets \citep[e.g.][]{Chui1992,Johannesson2007,Cressie2008,Nychka2012}, our $M$-RA automatically specifies the basis functions and the prior distributions of their weights to adapt to the given covariance function of interest, without requiring any restrictions on this covariance function. Thus, in contrast to other approaches, it is clear which covariance structure is approximated by the sum of basis functions in the $M$-RA. 

To achieve computational feasibility within the $M$-RA framework, an approximation of the ``remainder process'' at each resolution using so-called modulating functions is necessary. We consider two special cases: For the $M$-RA taper, the modulating functions are taken to be tapering functions (i.e., compactly supported correlation functions). For increasing resolution, the remainder process is approximated with increasingly restrictive tapering functions, leading to increasingly sparse matrices. In contrast, the $M$-RA-block iteratively splits each region at each resolution into a set of subregions, with the remainder process assumed to be independent between these subregions. This can lead to discontinuities at the region boundaries. A special case of the $M$-RA-block \citep{Katzfuss2015} performed very well in a recent comparison of different methods for large spatial data \citep{Heaton2017}. A further special case of the $M$-RA with only one resolution is given by the full-scale approximation \citep{Snelson2007,Sang2011a,Sang2012}.

The $M$-RA is suitable for inference based on large numbers of observations from a GP, which may be irregularly spaced. We will describe inference procedures that rely on operations on sparse matrices for computational feasibility. The $M$-RA-block can deal with massive datasets with tens of millions of observations or more, as it amenable to parallel computations on modern distributed computing systems. It can be viewed as a Vecchia-type approximation \citep{Katzfuss2017a}, and the approximated covariance matrix is a so-called hierarchical off-diagonal low-rank matrix \citep[e.g.,][]{Ambikasaran2016}. The $M$-RA-taper leads to more general sparse matrices, and thus requires more careful algorithms to fully exploit the sparsity structure, but it has the advantage of not introducing artificial discontinuities.

Relative to the $M$-RA-block in \citet{Katzfuss2015}, the contributions of our paper are the following: We introduce a general framework for $M$-RAs that provides a new, intuitive perspective on this approach. This allows an extension of the $M$-RA-block of \citet{Katzfuss2015} that removes the requirement that knots at the finest resolution correspond to the observed locations, and it enables us to introduce a novel $M$-RA-taper approach that extends the ideas of \citet{Sang2012} to more than one resolution. We provide more insights about the theoretical and computational properties of both versions of the $M$-RA. We also include further implementation details and numerical comparisons.

This article is organized as follows. In Section \ref{sec:MRA}, we first describe an exact orthogonal multi-resolution decomposition of a GP, which then leads to the $M$-RA framework and the two special cases described above by applying the appropriate modulating functions. We also study their theoretical properties. In Section \ref{sec:inference}, we discuss the algorithms necessary for statistical inference using the $M$-RA and provide details of the computational complexity. Numerical comparisons on simulated and real data are given in Sections \ref{sec:simulation} and \ref{sec:application}, respectively. We conclude in Section \ref{sec:conclusions}. All proofs can be found in Appendix \ref{app:proofs}. Additional simulation results can be found in a separate Supplementary Material document. All code will be provided upon publication.


\section{Multi-resolution approximations \label{sec:MRA}}

\subsection{The true Gaussian process \label{sec:general}}

Let $\{y_0(\bs) \!: \bs \in \domain\}$, or $y_0(\cdot)$, be the true spatial field or process of interest on a continuous (non-gridded) domain $\domain \subset \mathbb{R}^d$, $d \in \mathbb{N}^+$. We assume that $y_0(\cdot) \sim \GP(0,C_0)$ is a zero-mean Gaussian process with covariance function $C_0$. We place no restrictions on $C_0$, other than assuming that it is a positive-definite function that is known up to a vector of parameters, $\bftheta$. In real applications, $y_0(\cdot)$ will often not have mean zero, but estimating and subtracting the mean is usually not a computational problem. Once $y_0(\cdot)$ has been observed at a set of $n$ spatial locations $\locs$, the basic goal of spatial statistics is to make (likelihood-based) inference on the parameters $\bftheta$ and to obtain spatial predictions of $y_0(\cdot)$ at a set of locations $\locs^P$ (i.e., to obtain the posterior distribution of $\by_0(\locs^P)$). Direct computation based on the Cholesky decomposition of the resulting covariance matrix requires $\mathcal{O}(n^3)$ time and $\mathcal{O}(n^2)$ memory complexity, which is computationally infeasible when $n \gg 10^4$.

\subsection{Preliminaries \label{sec:preliminaries}}

A multi-resolution approximation with $M$ resolutions ($M$-RA) requires two main ``ingredients'': knots and modulating functions. The multi-resolutional set of knots, $\knots \colonequals \{\knots_0,\ldots,\knots_M\}$, is chosen such that, for all $m=0,1,\ldots,M$, $\knots_m = \{\knot_{m,1},\ldots,\knot_{m,r_m}\}$, is a set of $r_m$ knots, with $\knot_{m,i} \in \domain$. We assume that the number of knots increases with resolution (i.e., $r_0<r_1<\ldots<r_M$). An illustration of such a set of knots in a simple toy example is given in Figure \ref{fig:toyillus}.


The second ingredient is a set of ``modulating functions'' \citep{Sang2011a}, $\modu \colonequals \{\modu_0,\modu_1,\ldots,\modu_M\}$, where $\modu_m: \domain \times \domain \rightarrow [0,1]$ is a symmetric, nonnegative-definite function. In Section \ref{sec:examples} we will consider two specific examples, but for now we merely require that 
$\modu_m(\bs_1,\bs_2)$ is equal to 1 when $\bs_1=\bs_2$, and (exactly) equal to 0 when $\bs_1$ and $\bs_2$ are far apart. Here, the meaning of ``far'' depends on the resolution $m$, in that with increasing $m$, the modulating function should be equal to zero for increasingly large sets of pairs of locations in $\domain$.

Based on these ingredients, we make two definitions:

\begin{definition}[Predictive process]
\label{def:pp}
For a generic Gaussian process $x(\cdot) \sim \GP(0,C)$, define $x^{(m)}(\cdot)$ to be the predictive-process approximation \citep{Quinonero-Candela2005,Banerjee2008} of $x(\cdot)$ based on the knots $\knots_m$:
\[
  x^{(m)}(\bs) \colonequals \E\big(x(\bs) | \bx(\knots_m) \big) = \bb(\bs)'\bfeta, \quad \bs \in \domain,
\]
where $\bb(\bs)' = C(\bs,\knots_m)$ and $\bfeta \sim \normal_{r_m}(\bfzero,\prec^{-1})$, with $\prec = C(\knots_m,\knots_m)$.
\end{definition}
That is, the predictive process is a conditional expectation, and hence a smooth, low-rank approximation of $y(\cdot)$, which can also be written as a linear combination of basis functions \citep[cf.][]{Katzfuss2012}. Further, the remainder $x(\cdot) - x^{(m)}(\cdot) \sim \GP(0,C_R)$ is independent of $x(\cdot)$, with positive-definite covariance function $C_R(\bs_1,\bs_2) = C(\bs_1,\bs_2) - \bb(\bs_1)'\prec^{-1}\bb(\bs_2)$ \citep{Sang2012}.

\begin{definition}[Modulated process]
\label{def:modulated}
For a Gaussian process $x(\cdot) \sim \GP(0,C)$, define $[x]_{[m]}(\cdot)$ to be the ``modulated'' process corresponding to $x(\cdot)$:
\[
 [x]_{[m]}(\cdot) \sim \GP(0,[C]_{[m]}), \textnormal{ where } [C]_{[m]}(\bs_1,\bs_2) = C(\bs_1,\bs_2)\cdot\modu_m(\bs_1,\bs_2), \; \bs_1,\bs_2 \in \domain.
\]
\end{definition}
We see that $x(\cdot)$ and $[x]_{[m]}(\cdot)$ have the same variance structure (because $\modu_m(\bs,\bs)=1$), but $[x]_{[m]}(\cdot)$ has a compactly supported covariance function that is increasingly bad approximation of $C$ as $m$ and the distance between $\bs_1$ and $\bs_2$ increase.

\begin{figure}
	\begin{subfigure}{.45\textwidth}
	\centering
	\includegraphics[width =.98\linewidth]{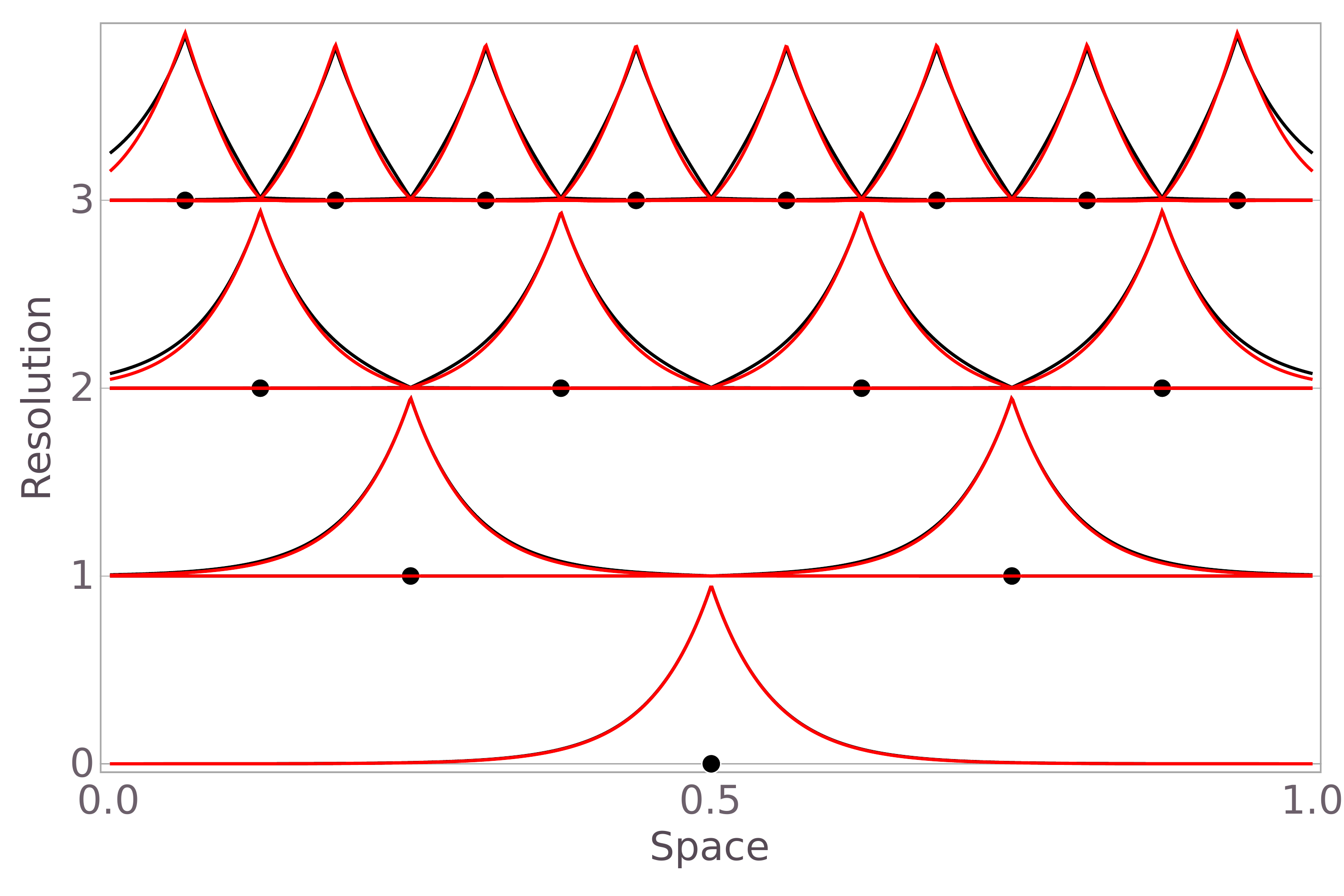}
	\caption{$M$-RA-taper with $d_m = 1.75/2^m$}
	\label{fig:knotsillus}
	\end{subfigure}%
\hfill
	\begin{subfigure}{.45\textwidth}
	\centering
	\includegraphics[width =.98\linewidth]{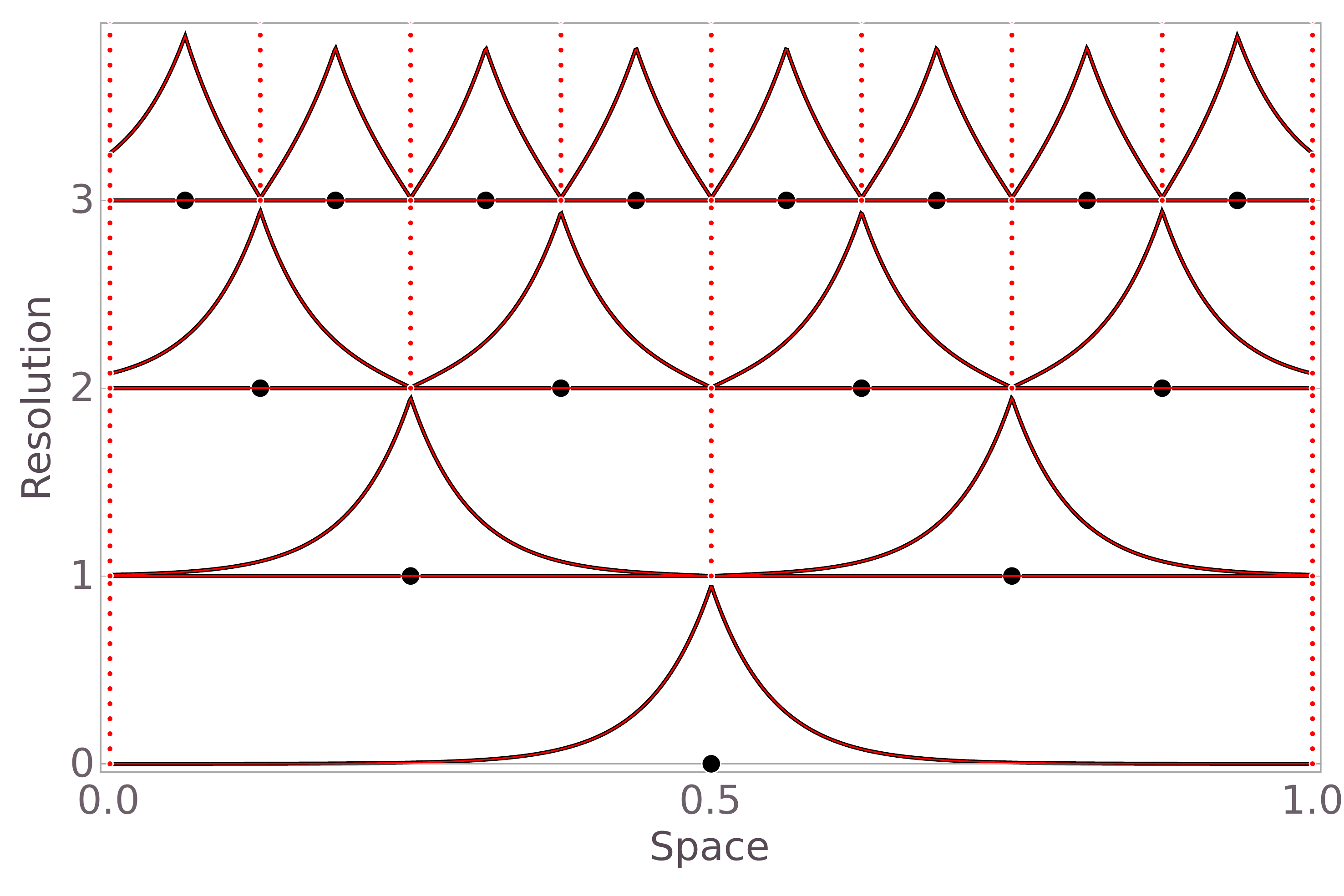}
	\caption{$M$-RA-block}
	\label{fig:bfillus}
	\end{subfigure}%
\hfill
	\begin{subfigure}{.1\textwidth}
		\centering
		\includegraphics[trim = 82mm 47mm 91mm 60mm, clip, width =.6\linewidth]{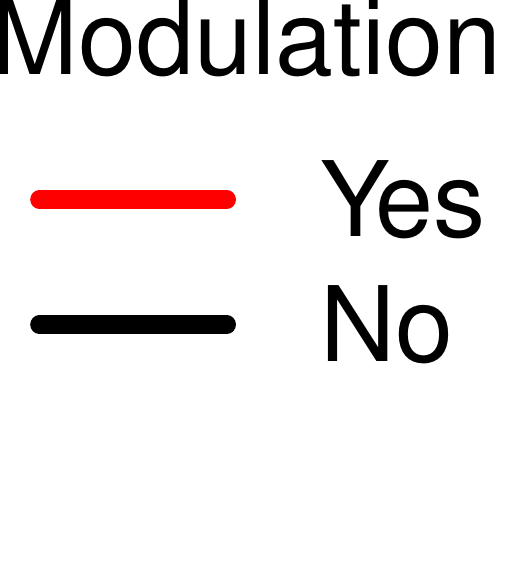}
	\end{subfigure}%
\caption{For $y_0(\cdot) \sim \GP(0,C_0)$ with exponential covariance function $C_0$ on $\domain = [0,1]$, a set of multi-resolution knots (black dots) and the corresponding basis functions using the orthogonal decomposition in \eqref{eq:orthdecomp} (black lines) and using two versions of the $M$-RA (red lines) with $r_0=1$, $J=2$, and $M=3$. The $M$-RA-block is exact in this setting (see Proposition \ref{prop:expexact}), and hence the red and black lines overlap.}
\label{fig:toyillus}
\end{figure}

\subsection{Exact multi-resolution decompositions of Gaussian processes \label{sec:orthdecomp}}

For any Gaussian process $y_0(\cdot) \sim \GP(0,C_0)$ (as specified in Section \ref{sec:general}), using Definition \ref{def:pp}, we can write $y_0(\cdot) = \pp_0(\cdot) + \delta_1(\cdot)$, where $\pp_0(\cdot) \colonequals y_0^{(0)}(\cdot)$ is the predictive process of $y_0(\cdot)$ based on the knots $\knots_0$, and $\delta_1(\cdot) \colonequals y_0(\cdot) - \pp_0(\cdot) \sim \GP(0,w_1)$ is independent from $\pp_0$ and is itself a Gaussian process with (positive-definite) covariance function $w_1$. This allows us to apply again the predictive process to $\delta_1(\cdot)$ (this time based on the knots $\knots_1$) to obtain the decomposition $\delta_1(\cdot) = \pp_1(\cdot) + \delta_2(\cdot)$, and so forth, up to some resolution $M \in \mathbb{N}$. 

This idea enables us to exactly decompose any $y_0(\cdot) \sim \GP(0,C_0)$ into orthogonal (i.e., independent) components at multiple resolutions:
\begin{equation}
\label{eq:orthdecomp}
 y_0(\cdot) \stackrel{d}{=} \pp_0(\cdot) + \ldots + \pp_{M-1}(\cdot) + \delta_M(\cdot),
\end{equation}
where $\pp_m(\cdot)\colonequals \delta_m^{(m)}(\cdot)$ is the predictive process of $\delta_m(\cdot)$ based on knots $\knots_m$, $\delta_0(\cdot)\colonequals y_0(\cdot)$, and $\delta_m(\cdot)\colonequals \delta_{m-1}(\cdot) - \pp_{m-1}(\cdot) \sim \GP(0,w_m)$ for $m=1,\ldots,M$. Further, using the basis-function representation from Definition \ref{def:pp}, we can write each component of the decomposition as $\pp_m(\cdot) = \ba_m(\cdot)'\bfgamma_m$, where $\bfgamma_m \stackrel{ind.}{\sim} \normal_{r_m}(\bfzero,\bfOmega^{-1})$, and starting with $w_0 = C_0$, we have for $m=1,\ldots,M-1$:
\begin{equation}
\label{eq:decompdetails}
\begin{split}
 \ba_m(\bs)' & \colonequals w_m(\bs,\knots_m), \quad \bs \in \domain\\  
 \bfOmega_m & \colonequals w_m(\knots_m,\knots_m)\\
w_{{m+1}}(\bs_1,\bs_2) & \colonequals w_{m}(\bs_1,\bs_2) - \ba_m(\bs_1)'\bfOmega_m^{-1} \ba_m(\bs_2), \quad \bs_1,\bs_2 \in \domain.
\end{split}
\end{equation}
An important feature of this decomposition is that components $\pp_m(\cdot)$ with low resolution $m$ capture mostly smooth, long-range dependence, whereas high-resolution components capture mostly fine-scale, local structure. This is because the predictive process at each resolution $m$ is an approximation to the first $r_m$ terms in the Karhunen-Lo\'eve expansion of $\delta_m(\cdot)$ \citep{Sang2012}. Figure \ref{fig:toyillus} illustrates the resulting basis functions in our toy example. 

It is straightforward to show that the decomposition of the process $y_0(\cdot) \sim \GP(0,C_0)$ in \eqref{eq:orthdecomp} also implies an equivalent decomposition of the covariance function $C_0$:
\begin{equation}
\label{eq:c0decomp}
C_0(\bs_1,\bs_2) = \sum_{m=0}^{M-1} w_m(\bs_1,\knots_m)w_m(\knots_m,\knots_m)^{-1}w_m(\knots_m,\bs_2) + w_M(\bs_1,\bs_2), \quad \bs_1,\bs_2 \in \domain.
\end{equation}

\subsection{The multi-resolution approximation \label{sec:mradef}}

The multi-resolution approximation ($M$-RA) is a ``modulated'' version of the exact decomposition in \eqref{eq:orthdecomp}, which at each resolution $m$ modulates the remainder using the function $\modu_m$ from Section \ref{sec:preliminaries}.
The key idea is that the predictive processes at low resolutions pick up the low-frequency variation in $y_0(\cdot)$, resulting in remainder terms that exhibit variability on smaller and smaller scales as $m$ increases, and so approximating the remainder using more and more restrictive modulating functions causes little approximation error.

\begin{definition}[Multi-resolution approximation ($M$-RA)]
\label{def:mra}
For a given $M \in \mathbb{N}$, the $M$-RA of a process $y_0(\cdot) \sim \GP(0,C_0)$ based on a set of knots $\knots = \{\knots_0,\ldots,\knots_M\}$ and a set of modulating functions $\modu = \{\modu_0,\ldots,\modu_M\}$, is given by
\begin{equation}
\label{eq:mra1}
  y_M(\cdot) = \sum_{m=0}^M \ppm_m(\cdot) = \sum_{m=0}^M \bb_m(\bs)'\bfeta_m,
\end{equation}
where $\ppm_m(\cdot) \colonequals \deltam_m^{(m)}(\cdot)$ and $\bfeta_m \stackrel{ind.}{\sim} \normal_{r_m}(\bfzero,\prec^{-1}_m)$ for $m=0,\ldots,M$; $\deltam_0(\cdot) \colonequals [y_0]_{[0]}(\cdot) \sim \GP(0,v_0)$ with $v_0 = [C_0]_{[0]}$; $\deltam_m(\cdot)=[\deltam_{m-1} - \ppm_{m-1}]_{[m]}(\cdot) \sim \GP(0,v_m)$ for $m=1,\ldots,M$; and 
\begin{equation}
\label{eq:mradetails}
\begin{split}
 \bb_m(\bs)' & \colonequals v_m(\bs,\knots_m), \quad \bs \in \domain, \quad m=0,\ldots,M,\\  
 \prec_m & \colonequals v_m(\knots_m,\knots_m), \quad m=0,\ldots,M,\\
v_{{m+1}}(\bs_1,\bs_2) & \colonequals \big(v_{m}(\bs_1,\bs_2) - \bb_m(\bs_1)'\prec_m^{-1} \bb_m(\bs_2)\big)\cdot\modu_{m+1}(\bs_1,\bs_2), \quad \bs_1,\bs_2 \in \domain, m=0,\ldots,M-1.
\end{split}
\end{equation}
\end{definition}

Figure \ref{fig:toyillus} shows the $M$-RA basis functions in our toy example. As can be seen, the $M$-RA is similar to a wavelet model, in that for increasing resolution $m$, we have an increasing number of basis functions with increasingly compact support. However, in contrast to wavelets, the basis functions $\bb(\cdot)$ and the precision matrix $\prec$ of the corresponding weights in the $M$-RA adapt to the covariance function $C_0$. Defining the basis functions recursively allows the $M$-RA to approximate $C_0$, while in other approaches \citep[e.g., wavelets, or][]{Nychka2012} with explicit expressions for the basis functions, the resulting covariance is less clear.

For ease of notation, we often stack the basis functions as $\bb(\cdot) \colonequals \big(\bb_0(\cdot)',\ldots,\bb_M(\cdot)'\big)'$ and the corresponding coefficients, $\bfeta \colonequals \big(\bfeta_0',\ldots,\bfeta_M'\big)'$, so that 
\begin{equation}
\label{eq:stacked}
y_M(\cdot) = \bb(\cdot)'\bfeta, \qquad \textnormal{where} \; \bfeta \sim \normal_{r}(\bfzero,\prec^{-1}),
\end{equation}
with $\prec \colonequals \blockdiag(\prec_0,\ldots,\prec_M)$ and $r = \sum_{m=0}^M r_m$.

\subsection{Specific examples \label{sec:examples}}

As described in Section \ref{sec:preliminaries}, the $M$-RA requires the choice of two ingredients: knots and modulating functions. In light of the computational complexities discussed in Sections \ref{sec:blockinf}--\ref{sec:details} below, we introduce a factor $J$, often chosen to be equal to 2 or 4. Then, starting with some (small) number of knots $r_0$ at resolution $m=0$, we henceforth assume $r_{m} = Jr_{m-1}$ for $m=1,\ldots,M$.

Regarding the modulating functions, we will now discuss two choices that lead to two important versions of the $M$-RA.

\subsubsection{$M$-RA-block \label{sec:block}}

To define the $M$-RA-block, we need a recursive partitioning of the spatial domain $\domain$, in which each of $J$ regions, $\domain_{1},\ldots,\domain_J$, is again divided into $J$ smaller subregions, and so forth, up to level $M$:
\[
\domain_{j_1,\ldots,j_{m-1}} = \dot{\textstyle\bigcup}_{j_{m}=1,\ldots,J} \, \domain_{j_1,\ldots,j_{m}}, \quad j_1,\ldots,j_{m-1}=1,\ldots,J; \quad m=1,\ldots,M.
\]
We then assume for each resolution $m$ that the modulated remainder $\delta_m(\cdot)$ is independent across partitions at the $m$th resolution. That is, the modulating function is defined as
\begin{equation}
\label{eq:partitioning}
\modu_m(\bs_i,\bs_j) = \begin{cases} 1, & (\im) = (\jm),\\
0, & \textnormal{otherwise}, \end{cases}  \qquad \bs_i \in \domain_\im, \; \bs_j \in \domain_\jm.
\end{equation}
Simply speaking, we have $\modu_m(\bs_1,\bs_2) = 1$ if $\bs_1,\bs_2$ are in the same region $\domain_\jm$, and $\modu_m(\bs_1,\bs_2) = 0$ otherwise. At resolution $m$, $\domain$ is split into $J^m$ subregions. Typically, we assume that the knots at each resolution are roughly equally spread throughout the domain, so that there are roughly the same number $r_m/J^m = r_0$ of knots in every such region.

The $M$-RA-block and the corresponding domain partitioning are illustrated in a toy example in Figure \ref{fig:bfillus}.
The $M$-RA-block was first proposed in \citet{Katzfuss2015} with the restriction that $\knots_M = \locs$. Another special case for $M=1$ is the block-full-scale approximation \citep{Snelson2007,Sang2011a}.

\subsubsection{$M$-RA-taper \label{sec:taper}}

We can also specify the modulating functions to be compactly supported correlation functions, often refered to as tapering functions. For simplicity, we assume here that the modulating functions are of the form,
\[
\modu_m(\bs_1,\bs_2) = \modu_*(\|\bs_1 - \bs_2\|/d_m),
\]
with $d_{m+1} = d_m/J^{1/d}$, where $d$ is the dimension of $\domain$, $\|\cdot\|$ is some norm on $\domain$, and $\modu_*$ is a compactly supported correlation function that is scaled such that $\modu^*(x) = 0$ for all $x \geq 1$. For simplicity, we will use Kanter's function \citep{Kanter1997} in all data examples:
\[
\modu_*(x) \colonequals \begin{cases} 1, & x=0,\\
\textstyle (1-x) \frac{\sin(2 \pi x )}{2 \pi x} + \frac{1 - \cos( 2 \pi x)}{2 \pi^2 x}, & x \in (0, 1),\\
0, & x\geq 1.\end{cases}
\]
For other possible choices of tapering functions, see \citet{Gneiting2002}. The taper-$M$-RA is illustrated in Figure \ref{fig:knotsillus}. A special case of the $M$-RA-taper for $M=1$ is the taper-full-scale approximation \citep{Sang2012,Katzfuss2012}.

\subsection{Properties of the $M$-RA process}

Throughout this subsection, let $y_M(\cdot)$ be the $M$-RA (as described in Definition \ref{def:mra}) of $y_0(\cdot) \sim \GP(0,C_0)$ on domain $\domain$ based on knots $\knots = \{\knots_0,\ldots,\knots_M\}$ and modulating functions $\modu = \{\modu_0,\ldots,\modu_M\}$. All proofs are given in Appendix \ref{app:proofs}.

\begin{prop}[Distribution of the $M$-RA]
\label{prop:mracov}
The $M$-RA is a Gaussian process, $y_M(\cdot) \sim \GP(0,C_M)$, with covariance function 
\[
C_M(\bs_1,\bs_2) = \sum_{m=0}^{M} v_m(\bs_1,\knots_m)v_m(\knots_m,\knots_m)^{-1}v_m(\knots_m,\bs_2), \quad \bs_1,\bs_2 \in \domain,
\]
where $v_m$ is defined in \eqref{eq:mradetails}.
We call $C_M$ the $M$-RA of the covariance function $C_0$.
\end{prop}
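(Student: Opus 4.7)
The plan is to work directly from the stacked basis-function representation in \eqref{eq:stacked}, which is already essentially all we need. Definition \ref{def:mra} gives us $y_M(\cdot) = \sum_{m=0}^M \bb_m(\cdot)'\bfeta_m$ with $\bfeta_m \stackrel{ind.}{\sim} \normal_{r_m}(\bfzero,\prec_m^{-1})$, and crucially the $\bfeta_m$'s are declared independent across resolutions $m$. So $y_M(\bs)$ is, for every $\bs \in \domain$, a fixed linear combination of jointly Gaussian vectors, hence Gaussian; any finite collection $(y_M(\bs_1),\ldots,y_M(\bs_k))$ is likewise a linear image of the stacked Gaussian vector $\bfeta = (\bfeta_0',\ldots,\bfeta_M')'$, so $y_M(\cdot)$ is a zero-mean Gaussian process. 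This handles the first part of the claim.

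For the covariance, I would compute
\[
\cov\bigl(y_M(\bs_1),y_M(\bs_2)\bigr) = \sum_{m=0}^M \sum_{m'=0}^M \bb_m(\bs_1)' \cov(\bfeta_m,\bfeta_{m'}) \bb_{m'}(\bs_2),
\]
and then use cross-resolution independence to kill every off-diagonal term, leaving
\[
\cov\bigl(y_M(\bs_1),y_M(\bs_2)\bigr) = \sum_{m=0}^M \bb_m(\bs_1)' \prec_m^{-1} \bb_m(\bs_2).
\]
Finally I substitute the definitions $\bb_m(\bs)' = v_m(\bs,\knots_m)$ and $\prec_m = v_m(\knots_m,\knots_m)$ from \eqref{eq:mradetails} to obtain the stated formula for $C_M$.

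There is essentially no hard step: the proposition is really a bookkeeping consequence of Definitions \ref{def:pp} and \ref{def:mra}. The only point worth pausing on is the independence of the $\bfeta_m$'s across $m$, which is a \emph{specification} of the $M$-RA rather than a derivation from any joint law on the $\deltam_m$'s --- so I would state it explicitly (pointing back to Definition \ref{def:mra}) rather than try to derive it from an orthogonality argument analogous to Section \ref{sec:orthdecomp}. The only other thing to verify in passing is that each $\prec_m = v_m(\knots_m,\knots_m)$ is invertible, which follows from $v_m$ being a positive-definite covariance function: this can be checked inductively using \eqref{eq:mradetails}, since each $v_{m+1}$ is the product of the positive-definite Schur-complement-type kernel $v_m(\bs_1,\bs_2) - \bb_m(\bs_1)'\prec_m^{-1}\bb_m(\bs_2)$ with the nonnegative-definite modulating function $\modu_{m+1}$, which is again positive-definite as a pointwise (Schur) product.
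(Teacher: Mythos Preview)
Your proposal is correct and follows essentially the same approach as the paper's own proof: both start from the stacked representation \eqref{eq:stacked} (equivalently the sum in \eqref{eq:mra1}) to conclude that $y_M(\cdot)$ is a zero-mean Gaussian process, and then read off the covariance by combining the independence of the $\bfeta_m$ with the identifications $\bb_m(\bs)'=v_m(\bs,\knots_m)$ and $\prec_m=v_m(\knots_m,\knots_m)$ from \eqref{eq:mradetails}. Your write-up is simply more explicit than the paper's terse version, and your added remark on the invertibility of $\prec_m$ is a reasonable aside that the paper does not spell out.
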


\begin{prop}[Duplication of knots]
\label{prop:knots}
If $\knot \in \knots_m$, then $v_{m+l}(\knot,\bs) = 0$ for any $\bs \in \domain$ and $l \geq 1$.
\end{prop}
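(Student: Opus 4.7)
The plan is to prove the statement by induction on $l$, with the base case exploiting the well-known interpolation property of the predictive process (which here becomes a statement about $v_{m+1}$ vanishing whenever one argument is a knot in $\knots_m$).

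For the base case $l=1$, I would argue as follows. Since $\knot \in \knots_m$, the row vector $\bb_m(\knot)' = v_m(\knot,\knots_m)$ is precisely one of the rows of the matrix $\prec_m = v_m(\knots_m,\knots_m)$; say it is the $j$-th row, where $\knot = \knot_{m,j}$. Consequently $\bb_m(\knot)'\prec_m^{-1}$ equals the $j$-th standard basis vector $\be_j'$, so that $\bb_m(\knot)'\prec_m^{-1}\bb_m(\bs) = \be_j'\bb_m(\bs) = v_m(\knot_{m,j},\bs) = v_m(\knot,\bs)$. Substituting into the recursion in \eqref{eq:mradetails} yields
\[
 v_{m+1}(\knot,\bs) = \big(v_m(\knot,\bs) - v_m(\knot,\bs)\big)\cdot \modu_{m+1}(\knot,\bs) = 0.
\]

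For the inductive step, I would assume $v_{m+l-1}(\knot,\bs) = 0$ for all $\bs \in \domain$. Then in particular $\bb_{m+l-1}(\knot)' = v_{m+l-1}(\knot,\knots_{m+l-1})$ is the zero vector, so the quadratic form $\bb_{m+l-1}(\knot)'\prec_{m+l-1}^{-1}\bb_{m+l-1}(\bs)$ vanishes and the recursion gives $v_{m+l}(\knot,\bs) = (0 - 0)\cdot\modu_{m+l}(\knot,\bs) = 0$. This completes the induction.

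The only real subtlety is the observation in the base case that $\bb_m(\knot)'\prec_m^{-1}$ collapses to a standard basis vector; the rest is bookkeeping through the recursion. No properties of the modulating functions beyond nonnegative-definiteness are used, so the result holds uniformly across the $M$-RA-block and $M$-RA-taper versions introduced in Section \ref{sec:examples}.
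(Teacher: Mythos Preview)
Your proof is correct and follows the same inductive structure as the paper's, with an identical inductive step. The only difference is in the base case: the paper invokes Lemma~\ref{lem:pp} (exactness of the predictive process at knot locations, proved via the law of total covariance) to conclude $\bb_m(\knot)'\prec_m^{-1}\bb_m(\bs)=v_m(\knot,\bs)$, whereas you establish the same identity by the more direct linear-algebra observation that $\bb_m(\knot)'$ is a row of $\prec_m$ and hence $\bb_m(\knot)'\prec_m^{-1}$ is a standard basis vector---this is arguably cleaner, as it avoids the detour through the probabilistic interpretation.
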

This proposition implies that there is no benefit to designate the same locations as knots at multiple resolutions; that is, all knot locations in $\knots$ should be unique.

\begin{prop}[Exact variance]
\label{prop:variance}
If $\bs \in \knots$, then the $M$-RA variance at location $\bs$ is exact; that is, $C_M(\bs,\bs)=C_0(\bs,\bs)$.
\end{prop}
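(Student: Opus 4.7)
The plan is to combine the formula for $C_M$ from Proposition \ref{prop:mracov} with the knot-duplication result in Proposition \ref{prop:knots}, and then telescope the recursive definition of $v_m$ in \eqref{eq:mradetails}.

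First, I would fix $\bs \in \knots$ and let $m^*$ be the unique resolution with $\bs \in \knots_{m^*}$ (uniqueness is not strictly needed; one could simply take the smallest such index). Writing out
\[
 C_M(\bs,\bs) = \sum_{m=0}^{M} v_m(\bs,\knots_m)\,v_m(\knots_m,\knots_m)^{-1}\,v_m(\knots_m,\bs),
\]
I would invoke Proposition \ref{prop:knots} to kill every term with $m>m^*$: since $\bs \in \knots_{m^*}$, $v_{m^*+l}(\bs,\cdot)\equiv 0$ for $l\geq 1$, so $v_m(\bs,\knots_m)=0$ for these $m$. For the surviving $m=m^*$ term, $\bs$ is one of the knots in $\knots_{m^*}$, so the row $v_{m^*}(\bs,\knots_{m^*})$ coincides with the corresponding row of the matrix $\prec_{m^*} = v_{m^*}(\knots_{m^*},\knots_{m^*})$; multiplying by $\prec_{m^*}^{-1}$ and then by the matching column collapses the quadratic form to exactly $v_{m^*}(\bs,\bs)$.

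The main (and only real) step is then to show $v_{m^*}(\bs,\bs) + \sum_{m=0}^{m^*-1} v_m(\bs,\knots_m)\prec_m^{-1} v_m(\knots_m,\bs) = C_0(\bs,\bs)$ by telescoping. The recursion in \eqref{eq:mradetails} gives
\[
 v_{m+1}(\bs,\bs) = \bigl(v_m(\bs,\bs) - v_m(\bs,\knots_m)\prec_m^{-1} v_m(\knots_m,\bs)\bigr)\cdot \modu_{m+1}(\bs,\bs),
\]
and the standing assumption on the modulating functions ($\modu_{m}(\bs,\bs)=1$) removes the $\modu_{m+1}$ factor. Since $v_0(\bs,\bs) = C_0(\bs,\bs)\cdot\modu_0(\bs,\bs) = C_0(\bs,\bs)$, summing the recursion from $m=0$ to $m=m^*-1$ yields exactly the desired identity.

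I do not anticipate any significant obstacle; the only thing to watch is the boundary case $m^*=0$ (where the sum is empty and the claim reduces to $v_0(\bs,\bs)=C_0(\bs,\bs)$) and the fact that picking out a single row/column via the knot membership $\bs\in\knots_{m^*}$ requires that the knots in $\knots_{m^*}$ be distinct — which is Proposition \ref{prop:knots}'s justification for assuming uniqueness of knots across resolutions. Everything else is routine substitution.
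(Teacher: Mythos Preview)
Your proof is correct and follows essentially the same approach as the paper: both arguments telescope the recursion $v_{m+1}(\bs,\bs)=v_m(\bs,\bs)-\bb_m(\bs)'\prec_m^{-1}\bb_m(\bs)$ (using $\modu_m(\bs,\bs)=1$) and then identify the final term of the $C_M$ sum with the appropriate $v$-value at a knot. The paper packages your ``collapse the $m=m^*$ term and zero out $m>m^*$'' step into a separate lemma (Lemma \ref{lem:covknot}, which shows $C_M(\bs,\bs)=\sum_{m=0}^{M-1}\bb_m(\bs)'\prec_m^{-1}\bb_m(\bs)+v_M(\bs,\bs)$ for $\bs\in\knots$) and telescopes all the way to $v_M$ rather than stopping at $v_{m^*}$, but the underlying mechanism---Proposition \ref{prop:knots} plus the row-extraction argument you give---is identical.
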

This proposition implies that, in contrast to other recent basis-function approaches \citep[e.g.,][]{Lindgren2011a,Nychka2012}, no variance or ``edge'' correction is needed for the $M$-RA if we place a knot location at each observed and prediction location.

Smoothness (i.e., differentiability) is a very important concept in spatial statistics, which has led to the popularity of the Mat\'ern covariance class with a parameter that flexibly regulates differentiability \citep[e.g.,][]{Stein1999}. The following proposition shows that any desired smoothness can be preserved when applying the $M$-RA: 
\begin{prop}[Smoothness]
\label{prop:smoothness}
If $\by_0(\cdot)$ is exactly $p$ times (mean-square) differentiable at $\bs \in \knots$, where $p \in \mathbb{Z}_{\geq 0}$, then $\by_M(\cdot)$ is also exactly $p$ times differentiable at $\bs$, provided that $C_0(\cdot,\knot)$ and $\modu_m(\cdot,\knot)$ are at least $2p$ times differentiable at $\bs$, for any $\knot \in \knots$ and $m=1,\ldots,M$.
\end{prop}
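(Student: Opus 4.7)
The plan is to reduce the statement about mean-square differentiability of the process to a classical-differentiability statement about the covariance function $C_M$, and then to prove that statement by induction on $m$. Specifically, I would invoke the standard fact that a zero-mean Gaussian process $x(\cdot) \sim \GP(0,C)$ is $p$ times mean-square differentiable at $\bs$ if and only if the mixed partial $\partial_1^{\alpha}\partial_2^{\beta} C(\bs_1,\bs_2)$ exists at $(\bs,\bs)$ for all multi-indices $|\alpha|,|\beta|\leq p$. So the task becomes showing that $C_M$ enjoys exactly this smoothness at $(\bs,\bs)$, no more and no less.

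By Proposition~\ref{prop:mracov}, $C_M(\bs_1,\bs_2) = \sum_{m=0}^M v_m(\bs_1,\knots_m)\prec_m^{-1}v_m(\knots_m,\bs_2)$, and the central factor $\prec_m^{-1}$ is a fixed matrix. Hence the smoothness of $C_M$ at $(\bs,\bs)$ is controlled entirely by the classical smoothness of each function $\bs_1 \mapsto v_m(\bs_1,\knot)$ at $\bs$ for $\knot \in \knots_m$ (and symmetrically in the second argument). I would then prove by induction on $m$ that $v_m(\cdot,\knot)$ is at least $2p$ times classically differentiable at $\bs$ for every $\knot \in \knots$. The base $m=0$ is immediate from Leibniz since $v_0(\bs_1,\knot)=C_0(\bs_1,\knot)\,\modu_0(\bs_1,\knot)$ and both factors are $2p$ times differentiable at $\bs$ by hypothesis. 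For the inductive step, \eqref{eq:mradetails} gives $v_{m+1}(\bs_1,\knot)=\big(v_m(\bs_1,\knot) - \sum_i c_i\, v_m(\bs_1,\knot_{m,i})\big)\cdot\modu_{m+1}(\bs_1,\knot)$, where the $c_i$ are constants read off from $v_m(\knots_m,\knots_m)^{-1}v_m(\knots_m,\knot)$. Each summand in the bracket is $2p$ times differentiable at $\bs$ by the inductive hypothesis (applied to $\knot$ and to each $\knot_{m,i}\in \knots$), linear combinations preserve this, and a final application of Leibniz with $\modu_{m+1}(\cdot,\knot)$ closes the induction.

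Chaining the conclusions, $D_1^\alpha D_2^\beta C_M$ exists at $(\bs,\bs)$ for $|\alpha|,|\beta|\leq p$, so $y_M$ is at least $p$ times mean-square differentiable at $\bs$.

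The main obstacle is the other half of ``exactly'': that $y_M$ is not $(p+1)$ times mean-square differentiable at $\bs$. My approach would be to track the leading singular term through the recursion. The factor $v_0(\cdot,\bs)=C_0(\cdot,\bs)\,\modu_0(\cdot,\bs)$ inherits the exact $2p$-order singularity of $C_0(\cdot,\bs)$ at $\bs$ because $\modu_0(\bs,\bs)=1$ and $\modu_0$ is smooth at $\bs$; the predictive-process subtraction $v_m - \bb_m\prec_m^{-1}\bb_m^\top$ removes only a finite-rank, strictly smoother component, and multiplication by the smooth $\modu_{m+1}$ does not create extra smoothness at $\bs$. Combined with Proposition~\ref{prop:variance} ($C_M(\bs,\bs)=C_0(\bs,\bs)$ at knot locations), this indicates that the $(p+1)$-th mixed diagonal derivative of $C_M$ at $(\bs,\bs)$ inherits the non-existence from $C_0$. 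Making this ``the singularity survives the sum'' step rigorous, by isolating the leading non-smooth contribution in $C_M$ and checking that it does not cancel across the $M+1$ resolutions, is the delicate part of the argument.
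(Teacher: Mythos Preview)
Your strategy for the ``at least'' half mirrors the paper's: both reduce the question to classical differentiability of $C_M$ near the diagonal and both run an induction on $m$ showing that $v_m(\cdot,\knot)$ is at least $2p$ times differentiable at $\bs$ for every $\knot\in\knots$. The substantive difference is in how the ``exactly'' half is obtained. You stay with the tensor-product form $C_M=\sum_m \bb_m'\prec_m^{-1}\bb_m$ of Proposition~\ref{prop:mracov} and then have to argue, informally, that a singular contribution survives the $(M{+}1)$-term sum. The paper instead invokes Lemma~\ref{lem:covknot} --- which is precisely where the hypothesis $\bs\in\knots$ enters --- to replace the top-resolution predictive-process term by the raw remainder, writing
\[
C_M(\bs,\bs+\bh)=\sum_{m=0}^{M-1}f_{m,\bs}(\bh)+v_M(\bs,\bs+\bh),
\qquad f_{m,\bs}(\bh)=v_m(\bs,\knots_m)\prec_m^{-1}v_m(\knots_m,\bs+\bh).
\]
The induction then carries one extra clause beyond yours: $\bh\mapsto v_m(\bs,\bs+\bh)$ is \emph{exactly} $2p$-times differentiable at the origin, while each $f_{m,\bs}$ is at least so. At the end a single diagonal term $v_M(\bs,\bs+\bh)$ holds the precise order of differentiability, and no cancellation argument across resolutions is needed. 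This use of Lemma~\ref{lem:covknot} to expose the diagonal remainder is the idea your proposal is missing for the upper bound; once you have it, your ``strictly smoother component'' heuristic becomes the precise statement that each $f_{m,\bs}$ is at least $2p$-times differentiable.

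A further caution about the differentiability criterion you adopt. With the tensor-product representation, your own induction shows that every $\bb_m$ is $2p$ times differentiable at $\bs$, so $\partial_1^{\alpha}\partial_2^{\beta}C_M$ in fact exists at $(\bs,\bs)$ for all $|\alpha|,|\beta|\le 2p$, not merely $\le p$. Under your mixed-partial criterion that would make $y_M$ at least $2p$ times mean-square differentiable and would defeat the ``not $(p{+}1)$ times'' step you are aiming for. The paper avoids this by working throughout with the one-argument criterion that $\bh\mapsto C(\bs,\bs+\bh)$ be $2p$-times differentiable at the origin, in tandem with the Lemma~\ref{lem:covknot} representation; you would need to align with that formulation for the argument to close.
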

Many commonly used covariance functions (e.g., Mat\'{e}rn) are infinitely differentiable away from the origin. If $C_0$ is such a covariance function, the $M$-RA-block thus has the same smoothness as the original process $\by_0(\cdot)$ at any $\bs$ that is not located on the boundary between subregions at any resolution \citep[cf.][]{Katzfuss2015}. Tapering functions are often smooth away from the origin, except at the distance at which they become exactly zero. Thus, the $M$-RA-taper will typically have the same smoothness at $\bs$ as $\by_0(\cdot)$ if $\modu$ is at least $2p$ times differentiable at the origin and $\bs$ is not exactly at distance $d_m$ from any $\knot \in \knots_m$, for all $m=1,\ldots,M$. Note that this result does not require the smoothness of $y_0$ to be the same at all locations $\bs$; if the smoothness (or other local characteristics) of the covariance function $C_0$ varies over space, the $M$-RA will automatically adapt to this nonstationarity and vary over space accordingly.

There is, however, an issue with the continuity of the $M$-RA-block process at the region boundaries, which can be highly undesirable in prediction maps:
\begin{prop}[Continuity]
\label{prop:continuity}
Assume that $C_0$ is a continuous function. Then, for the $M$-RA-taper, realizations of the corresponding process $y_M(\cdot)$ and the posterior mean (i.e., kriging prediction) surface $\mu_M(\bs) \colonequals \E(y_M(\bs)|\bz)$ based on observations $\bz$ as in \eqref{eq:obs} are both continuous, assuming that $\modu_m$ is continuous for all $m=0,1,\ldots,M$. In contrast, for the $M$-RA-block, $y_M(\cdot)$ and $\mu_M(\cdot)$ are both discontinuous in general at any $\bs$ on the boundary between any two subregions.
\end{prop}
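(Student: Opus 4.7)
The plan is to exploit the explicit basis-function expansion $y_M(\bs)=\bb(\bs)'\bfeta$ from \eqref{eq:stacked}. Because this is a \emph{finite} linear combination of deterministic functions $\bb_m(\cdot)$ with either random ($\bfeta$) or deterministic ($E(\bfeta\mid\bz)$) coefficients, continuity of both sample paths and the posterior mean $\mu_M$ reduces entirely to continuity of the maps $\bs\mapsto\bb_m(\bs)=v_m(\bs,\knots_m)$. So everything boils down to analyzing the regularity of the deterministic functions $v_m(\cdot,\cdot)$ defined by the recursion in \eqref{eq:mradetails}.

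For the $M$-RA-taper, I would prove by induction on $m$ that $v_m$ is jointly continuous on $\domain\times\domain$. The base case $v_0(\bs_1,\bs_2)=C_0(\bs_1,\bs_2)\,\modu_0(\bs_1,\bs_2)$ is continuous as a product of continuous functions. For the inductive step, the recursion
\[
v_{m+1}(\bs_1,\bs_2) = \bigl(v_m(\bs_1,\bs_2)-v_m(\bs_1,\knots_m)\,\prec_m^{-1}\,v_m(\knots_m,\bs_2)\bigr)\,\modu_{m+1}(\bs_1,\bs_2)
\]
combines continuous functions by sums, products, and evaluation at fixed knots (which preserves continuity in the remaining argument), noting that $\prec_m^{-1}$ is a fixed matrix. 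Hence $v_{m+1}$ is continuous, and taking $\bs_2=\knot\in\knots_m$ gives continuity of each entry of $\bb_m$. Continuity of the finite sum $y_M(\bs)=\bb(\bs)'\bfeta$ and of $\mu_M(\bs)=\bb(\bs)'E(\bfeta\mid\bz)$ then follows immediately.

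For the $M$-RA-block, the key observation is that $\modu_{m+1}$ appears as an outer multiplicative factor of $v_{m+1}$ in the same recursion. Fix a resolution $m^*$ and a location $\bs^*$ that lies on the boundary between two subregions $\domain_\im,\domain_\jm$ at resolution $m^*$ but not on any coarser-resolution boundary; then $\modu_0,\ldots,\modu_{m^*-1}$ are locally continuous at $\bs^*$, so by the same induction the ``inner factor'' in the recursion at level $m^*$ is continuous at $\bs^*$. Pick a knot $\knot\in\knots_{m^*}$ lying in the interior of $\domain_\im$. By the partition-based definition \eqref{eq:partitioning}, $\modu_{m^*}(\bs,\knot)$ equals $1$ as $\bs\to\bs^*$ from inside $\domain_\im$ but equals $0$ from inside $\domain_\jm$, producing a jump in the corresponding entry of $\bb_{m^*}(\bs)$ whenever the inner factor is nonzero at $(\bs^*,\knot)$. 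Since $\bfeta$ is nondegenerate Gaussian, the matching component is almost surely nonzero, so $y_M$ is discontinuous at $\bs^*$; a parallel argument with the (generically nonzero) posterior-mean weight yields discontinuity of $\mu_M$.

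The main obstacle is formalizing the phrase ``in general'' in the block case: the inner factor could, on a knife-edge set of parameters, vanish at $(\bs^*,\knot)$ and accidentally cancel the jump coming from $\modu_{m^*}$. I would handle this either by appealing to strict positive-definiteness of $C_0$ (which forces the remainder covariance $w_{m^*}(\bs^*,\knot)$ that governs the inner factor to be generically nonzero) or by choosing the knot $\knot$ so that the inner factor is nonzero; some such $\knot$ must exist whenever the $m^*$-th resolution is actually contributing to the approximation. The rest of the argument is essentially bookkeeping along the recursion.
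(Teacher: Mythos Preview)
Your proposal is correct and follows essentially the same route as the paper: both arguments reduce continuity to an induction on $m$ showing that $v_m(\cdot,\knot)$ inherits continuity from $C_0$ and the $\modu_l$ via the recursion \eqref{eq:mradetails}, and both trace the block discontinuity to the jump in $\modu_{m^*}(\cdot,\knot)$ at a subregion boundary, with the same ``in general'' caveat that the inner factor could, in degenerate cases, vanish (the paper phrases this as ``unless \dots\ the $M$-RA-block is exact''). The only cosmetic difference is that you work directly with the finite basis expansion $y_M(\bs)=\bb(\bs)'\bfeta$ to obtain pathwise continuity, whereas the paper argues via $C_M(\bs,\bs+\bh)$ for mean-square continuity and writes $\mu_M(\bs)=\bz'\cov(\bz)^{-1}C_M(\locs,\bs)$; these are equivalent formulations of the same computation.
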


\begin{prop}[Exactness of $M$-RA-block]
\label{prop:expexact}
Let $C_0$ be a (stationary) exponential covariance function on the real line, $\domain=\mathbb{R}$. Further, let $C_M$ be the covariance function of the corresponding $M$-RA-block (see Section \ref{sec:block}) with $r_m=(J-1) J^m$ knots for $m=0,\ldots,M-1$, which are placed such that at each resolution $m$, a knot is located on each boundary between two subregions at resolution $m+1$.
Then, the $M$-RA is exact at every knot location; that is, $C_M(\bs_1,\bs_2) = C_0(\bs_1,\bs_2)$ for any $\bs_1,\bs_2 \in \knots$.
\end{prop}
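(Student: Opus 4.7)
The approach exploits the one-dimensional Markov property of the exponential covariance: as an Ornstein--Uhlenbeck process on $\mathbb{R}$, $y_0$ satisfies $y_0(\bs_1) \perp y_0(\bs_2) \mid y_0(\knot)$ whenever $\bs_1 < \knot < \bs_2$. The prescribed knot placement ensures that every boundary between two level-$m$ subregions carries a knot from $\knots_0 \cup \cdots \cup \knots_{m-1}$: if two level-$m$ subregions first become separated at some level $\ell \le m$, the dividing point is a newly-created level-$\ell$ boundary, and by hypothesis a knot from $\knots_{\ell-1}$ sits there. The first step of the proof is to recognize the exact unmodulated remainder covariance $w_m$ in \eqref{eq:c0decomp} as a conditional covariance,
\[
 w_m(\bs_1,\bs_2) = \cov\bigl(y_0(\bs_1),\, y_0(\bs_2) \,\big|\, y_0(\knots_0 \cup \cdots \cup \knots_{m-1})\bigr),
\]
which follows from a short induction using orthogonality of the decomposition in \eqref{eq:orthdecomp} (so that sequentially conditioning on $\pp_0,\ldots,\pp_{m-1}$ is equivalent to jointly conditioning on $y_0$ at the union of knots) and the fact that Gaussian conditional covariances are deterministic. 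Combining the two preceding observations, the Markov property gives $w_m(\bs_1,\bs_2) = 0$ whenever $\bs_1, \bs_2$ lie in distinct level-$m$ subregions.

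Next, by induction on $m$, I would show $v_m \equiv w_m$ on $\domain \times \domain$. The base case $m=0$ is immediate from $\modu_0 \equiv 1$. For the inductive step, the recursion in \eqref{eq:mradetails} for $v_{m+1}$ has exactly the Schur-complement form of the unmodulated recursion in \eqref{eq:decompdetails} for $w_{m+1}$, post-multiplied by $\modu_{m+1}$; within any level-$(m+1)$ subregion $\modu_{m+1} = 1$, and across subregions $w_{m+1}$ already vanishes by the previous paragraph, so the modulation is benign either way. With $v_m \equiv w_m$ in hand, Proposition \ref{prop:mracov} combined with \eqref{eq:c0decomp} yields
\[
 C_M(\bs_1,\bs_2) - C_0(\bs_1,\bs_2) = w_M(\bs_1,\knots_M)\, w_M(\knots_M,\knots_M)^{-1}\, w_M(\knots_M,\bs_2) - w_M(\bs_1,\bs_2).
\]
For $\bs_1, \bs_2 \in \knots_M$, both terms equal $w_M(\bs_1,\bs_2)$, since a predictive process interpolates at its own knots; for $\bs_1 \in \knots_m$ with $m < M$, the exact analog of Proposition \ref{prop:knots} (proved by the same argument the paper already uses for $v_m$) gives $w_M(\bs_1,\cdot) \equiv 0$, so both terms vanish. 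In either case the difference is zero, proving exactness on $\knots$.

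The main obstacle I anticipate is really the geometric bookkeeping backing the ``separating-knot'' claim: verifying carefully that under the knot count $r_m = (J-1)J^m$ and the recursive interval partitioning of $\mathbb{R}$, every pair of points in different level-$m$ subregions is indeed screened by a knot from strictly earlier resolutions lying between them. Once this combinatorial fact is laid out, the Markov property of the exponential covariance does the conceptual work, and the remaining steps reduce to routine linear algebra via Schur complements and the standard knot-interpolation property of the predictive process.
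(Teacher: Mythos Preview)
Your proposal is correct and follows essentially the same route as the paper: the paper also establishes $w_m$ as the conditional covariance given $y_0(\knots_0\cup\cdots\cup\knots_{m-1})$ (its Lemma~\ref{lem:sumpp}), uses the one-dimensional Markov property to conclude $w_m$ vanishes across level-$m$ subregions (its Lemma~\ref{lem:expind}), proves $v_m\equiv w_m$ by the same induction you outline, and finishes by matching the expressions for $C_M$ and $C_0$ at knot locations. Your final step unpacks what the paper's Lemma~\ref{lem:covknot} packages, but the content is identical.
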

This proposition is illustrated in Figure \ref{fig:bfillus}.
As we will see in Section \ref{sec:blockinf}, this result allows us to exactly decompose a $n \times n$ exponential covariance matrix in terms of a sparse matrix with $n$ rows but only about $\log_2 n$ nonzero elements per row with $r_0=1$ and $J=2$. This leads to tremendous computational savings (e.g., $\log_2(n) < 30$ for $n=1$ billion).

While the exact result in Proposition \ref{prop:expexact} relies on the Markov property and the exact screening effect of the exponential covariance function (which is a Mat\'{e}rn covariance with smoothness parameter $\nu=0.5$), similar but approximate results are expected to hold for larger smoothness parameters in one dimension. Specifically, \citet{Stein2011b} shows that an asymptotic screening effect holds for $\nu=1.5$ when using conditioning sets of size 2, and he conjectures that an asymptotic screening effect holds for any $\nu$ when using conditioning sets of size greater than $\nu$. This conjecture is also explored numerically in \citet{Katzfuss2017a}. To exploit this screening effect using the $M$-RA-block, we can simply place $c>\nu$ knots near every subregion boundary (i.e., $r_0 = c(J-1)$).


\section{Inference \label{sec:inference}}

In this section, we describe inference for the $M$-RA, based on a set of $n$ measurements at locations $\locs$. We assume additive, independent measurement error, such that 
\begin{equation}
\label{eq:obs}
\bz = \by_M(\locs) + \bfepsilon, \quad \bfepsilon \sim \normal_n(\bfzero,\bV_\epsilon),
\end{equation}
where $\bV_\epsilon$ is a diagonal matrix. We assume that $C_0$ and $\bV_\epsilon$ are fully determined by the parameter vector $\bftheta$, which will be assumed fixed at a particular value, unless noted otherwise.
For the sparsity and complexity calculations, we assume $r_m=r_0J^m$ and $n= \order(r_M)$.

\subsection{General inference results}

\subsubsection{Prior matrices \label{sec:prior}}

For a given set of parameters, the covariance function $C_0$, and hence the basis functions $\bb(\cdot)$ and the precision matrix $\prec$ in \eqref{eq:stacked} are fixed. The prerequisite for inference is to calculate the prior matrices $\prec$ and $\bB\colonequals [\bB_0, \ldots, \bB_M] \colonequals [\bb_0(\locs),\ldots,\bb_M(\locs)]$. Define $\bW_{m,l}^k \colonequals v_k(\knots_m,\knots_l)$ and $\bW_{\locs,m}^k \colonequals v_k(\locs,\knots_m)$, so that $\prec_m = \bW_{m,m}^m$ and $\bB_m=\bW_{\locs,m}^m$. For $m = 0,\ldots,M$, starting with $\bW_{m,l}^0 = v_0(\knots_m,\knots_l)$ and $\bW_{\locs,m}^0 = v_0(\locs,\knots_m)$, it is straightforward to verify that
\begin{equation}
\label{eq:priorw1}
\textstyle\bW_{m,l}^{k+1} = \big( \bW_{m,l}^{k} - \bW_{m,k}^{k} \prec_{k}^{-1} \bW_{l,k}^{k}{}' \big) \circ \modu_{k+1}(\knots_m,\knots_l), \qquad k=0,\ldots,l-1; \quad l = 0,\ldots,m;
\end{equation}
and
 \begin{equation}
\label{eq:priorw2}
\textstyle\bW_{\locs,m}^{k+1} = \big( \bW_{\locs,m}^{k} - \bW_{\locs,k}^{k} \prec_{k}^{-1} \bW_{m,k}^{k}{}' \big) \circ \modu_{k+1}(\locs,\knots_m), \qquad k=0,\ldots,m-1.
\end{equation}
Here, $\circ$ denotes the Hadamard or element-wise product.
Note that $\prec_m$ and $\bB_m$ both grow in dimension and become increasingly sparse with increasing resolution $m$. We have $(\prec_m)_{i,j}=0$ if $\modu_m(\knot_{m,i},\knot_{m,j})=0$, and $(\bB_m)_{i,j} =0$ if $\modu_m(\bs_i,\knot_{m,j})=0$.

\subsubsection{Posterior inference \label{sec:posterior}}

Once $\prec$ and $\bB$ have been obtained, the posterior distribution of the unknown weight vector, $\bfeta$, is given by well-known formulas for conjugate normal-normal Bayesian models:
 \begin{equation}
\label{eq:postdist}
\bfeta \,|\, \bz \sim \normal_{r}(\widetilde\bfnu,\pprec^{-1}),
\end{equation}
where $\pprec = \prec + \bB'\bV_\epsilon^{-1}\bB$, $\widetilde\bfnu = \pprec^{-1} \widetilde\bz$, and $\widetilde\bz = \bB'\bV_\epsilon^{-1}\bz$.

Based on this posterior distribution of $\bfeta$, the likelihood can be written as \citep[e.g.,][]{Katzfuss2014}:
\begin{equation}
\label{eq:likelihood}
 -2 \log L(\bftheta) = -\log |\prec| + \log|\pprec| + \log | \bV_\epsilon | + \bz'\bV_\epsilon^{-1}\bz - \widetilde\bz'\pprec^{-1}\widetilde\bz.
\end{equation}
Using this expression, the likelihood can be evaluated quickly for any given value of the parameter vector $\bftheta$. This allows us to carry out likelihood-based inference (e.g., maximum likelihood or Metropolis-Hastings) on the parameters in $C_0$ and $\bV_\epsilon$, by computing the quantities in \eqref{eq:priorw1}--\eqref{eq:likelihood} for each parameter value.

To obtain spatial predictions for fixed parameters $\bftheta$, note that $\by_M(\locs^P) = \bB^P \bfeta$, where $\bB^P \colonequals \bb(\locs^P)$. Defining $\bW_{\locs^P,l}^{k} \colonequals v_k(\locs^P,\knots_l)$, $\bB^P = [\bB^P_0,\ldots,\bB^P_M]$ can be obtained based on the quantities from Section \ref{sec:prior} by calculating $\bW_{\locs^P,m}^{0} = v_0(\locs^P,\knots_m)$ and 
\[
\textstyle\bW_{\locs^P,m}^{k+1} = \big( \bW_{\locs^P,m}^{k} - \bW_{\locs^P,k}^{k} \prec_{k}^{-1} \bW_{m,k}^{k}{}' \big) \circ \modu_{k+1}(\locs^P,\knots_m), \qquad k=0,\ldots,m-1,
\]
and setting $\bB^P_m = \bW_{\locs^P,m}^{m}$, for $m=0,\ldots,M$.
The posterior predictive distribution is given by,
\begin{equation}
\label{eq:prediction}
 \by_M(\locs^P)\,|\, \bz \sim \normal_{n_P}(\bB^P \widetilde\bfnu, \bB^P \pprec^{-1} \bB^P{}').
\end{equation}

Hence, the main computational effort required for inference is the Cholesky decomposition of $\pprec$, the posterior precision matrix of the basis-function weights in (\ref{eq:postdist}). As $\prec$ and $\bB$ are both sparse, $\pprec$ is a sparse matrix that can be decomposed quickly. Specifically, $\pprec$ has the block structure $\pprec = (\pprec_{m,l})_{m,l=0,\ldots,M}$, where $\pprec_{m,l} = \prec_m \mathbbm{1}_{\{m=l\}} + \bB_m'\bV_\epsilon^{-1}\bB_l$ is an $r_m \times r_l$ matrix whose $(i,j)$th element is 0 if $\not\exists \bs \in \domain$ such that $\modu_m(\knot_{m,i},\bs) \neq 0$ and $\modu_l(\knot_{l,j},\bs)\neq 0$. Figure \ref{fig:sparsity} shows the sparsity structures of $\bB$, $\prec$, and $\pprec$ corresponding to the toy example in Figure {\ref{fig:toyillus}}.

\begin{figure}
	\label{fig:sparsity}
	\begin{subfigure}{.32\textwidth}
		\centering
		\includegraphics[width=.8\linewidth]{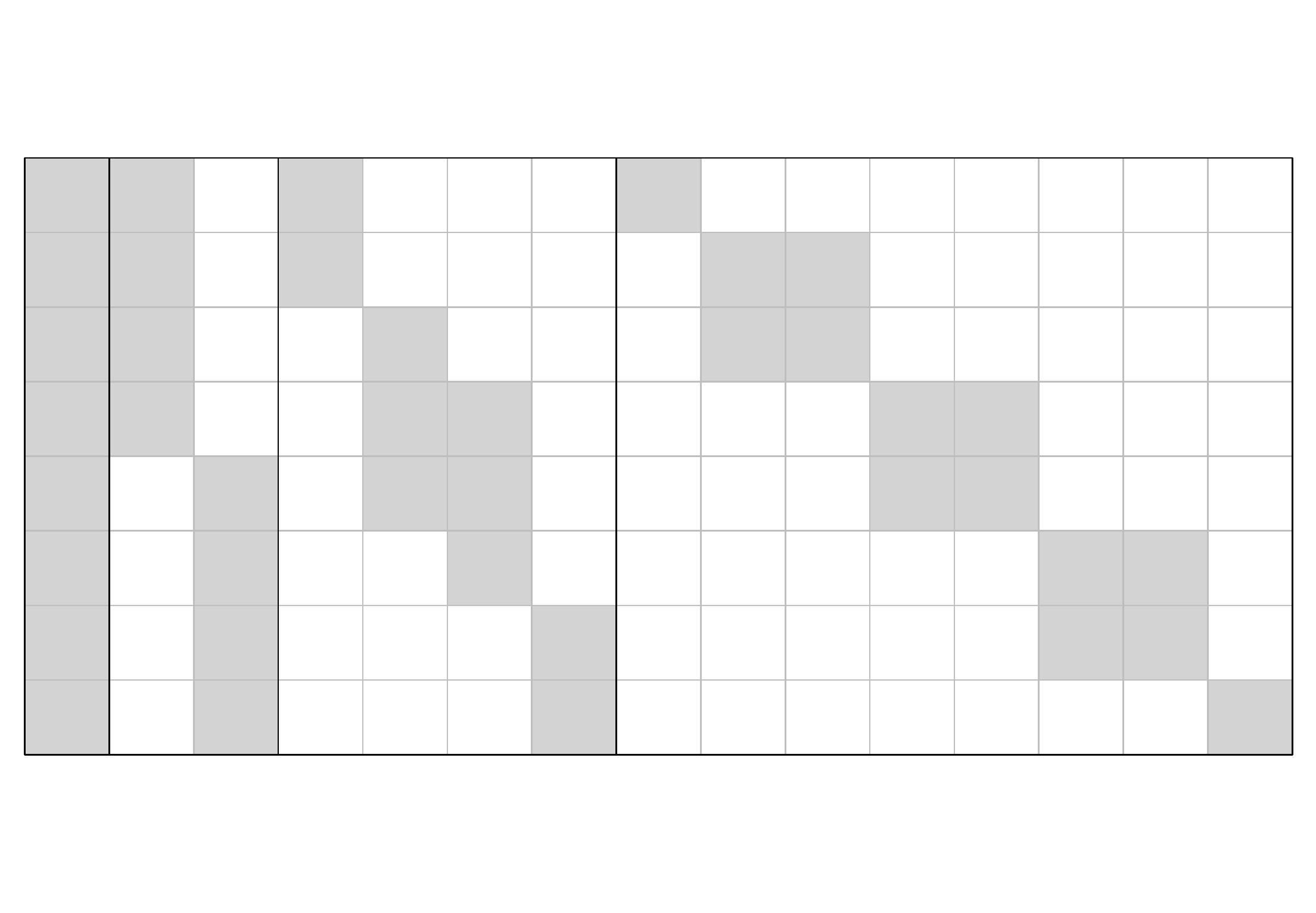} 
		\caption{$\bB$ for $M$-RA-taper}
		\label{fig:TaperB}
	\end{subfigure}%
	\hfill
	\begin{subfigure}{.32\textwidth}
		\centering
		\includegraphics[trim = 20mm 0mm 20mm 0mm, clip,width=.8\linewidth]{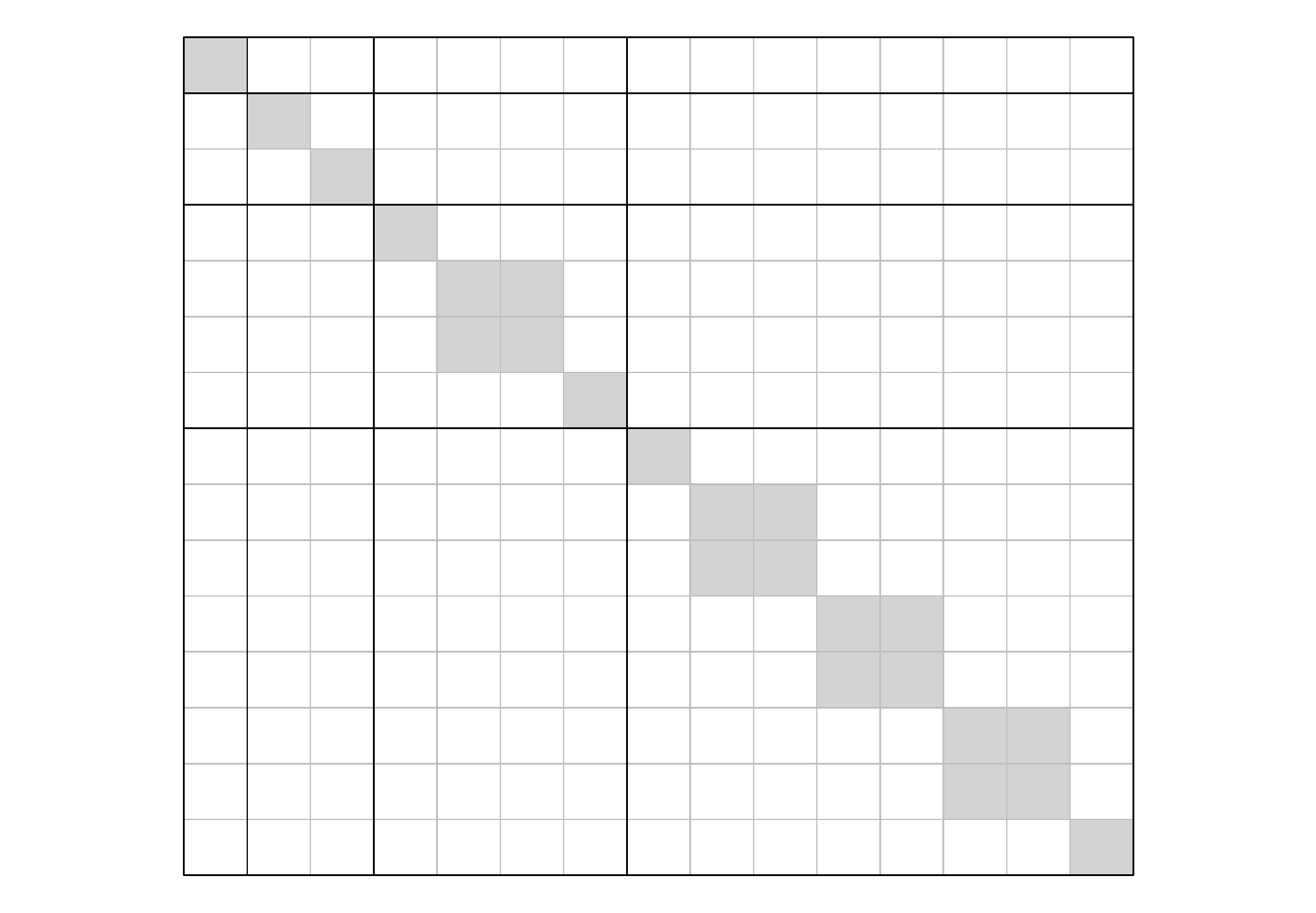} 
		\caption{$\prec$ for $M$-RA-taper}
		\label{fig:TaperP}
	\end{subfigure}
	\hfill
	\begin{subfigure}{.32\textwidth}
		\centering
		\includegraphics[trim = 20mm 0mm 20mm 0mm, clip,width=.8\linewidth]{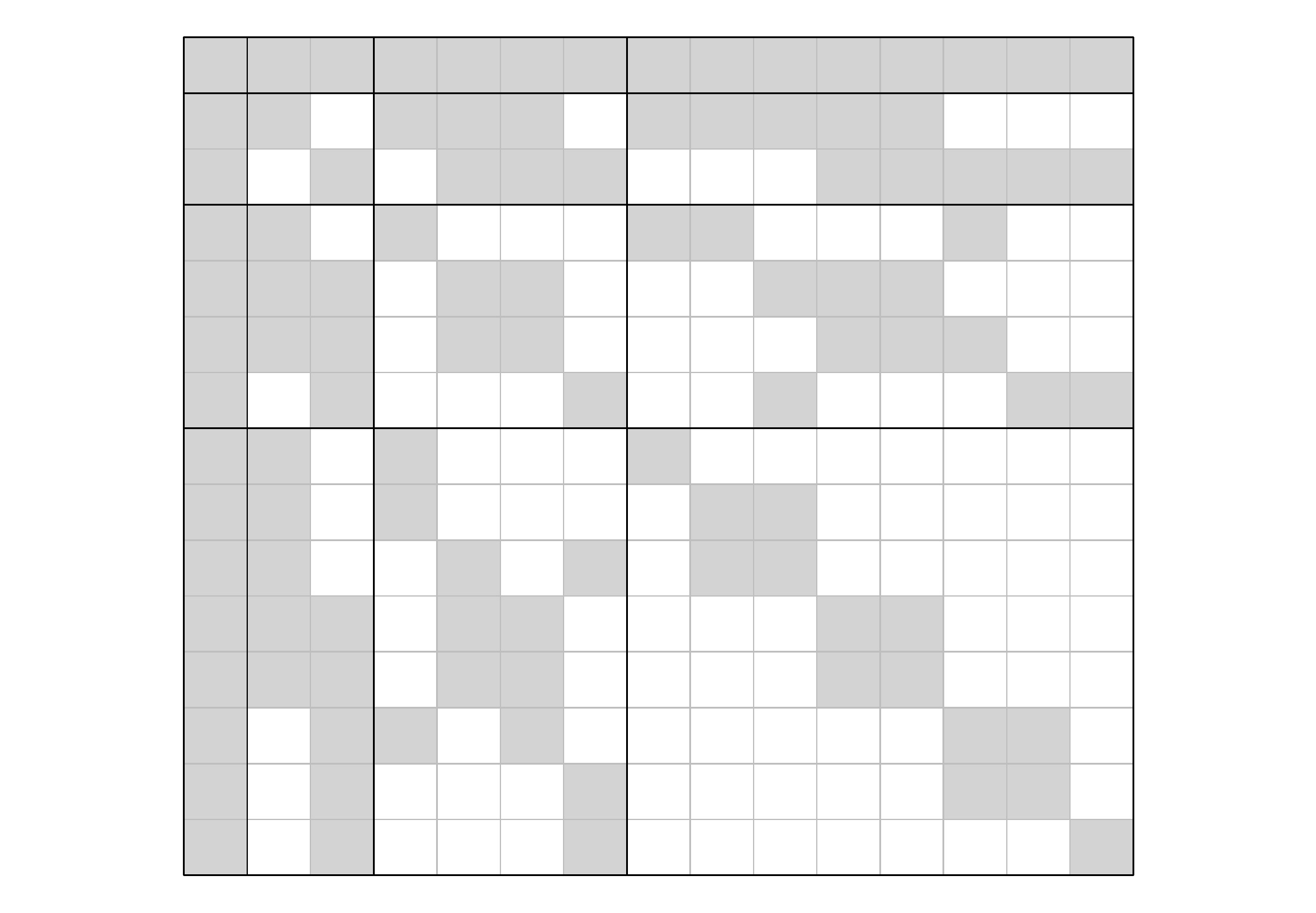}
		\caption{$\pprec$ for $M$-RA-taper}
		\label{fig:TaperPp}
	\end{subfigure}%
	\hfill
	
	\vspace{2mm}
	\begin{subfigure}{.32\textwidth}
		\centering
		\includegraphics[width=.8\linewidth]{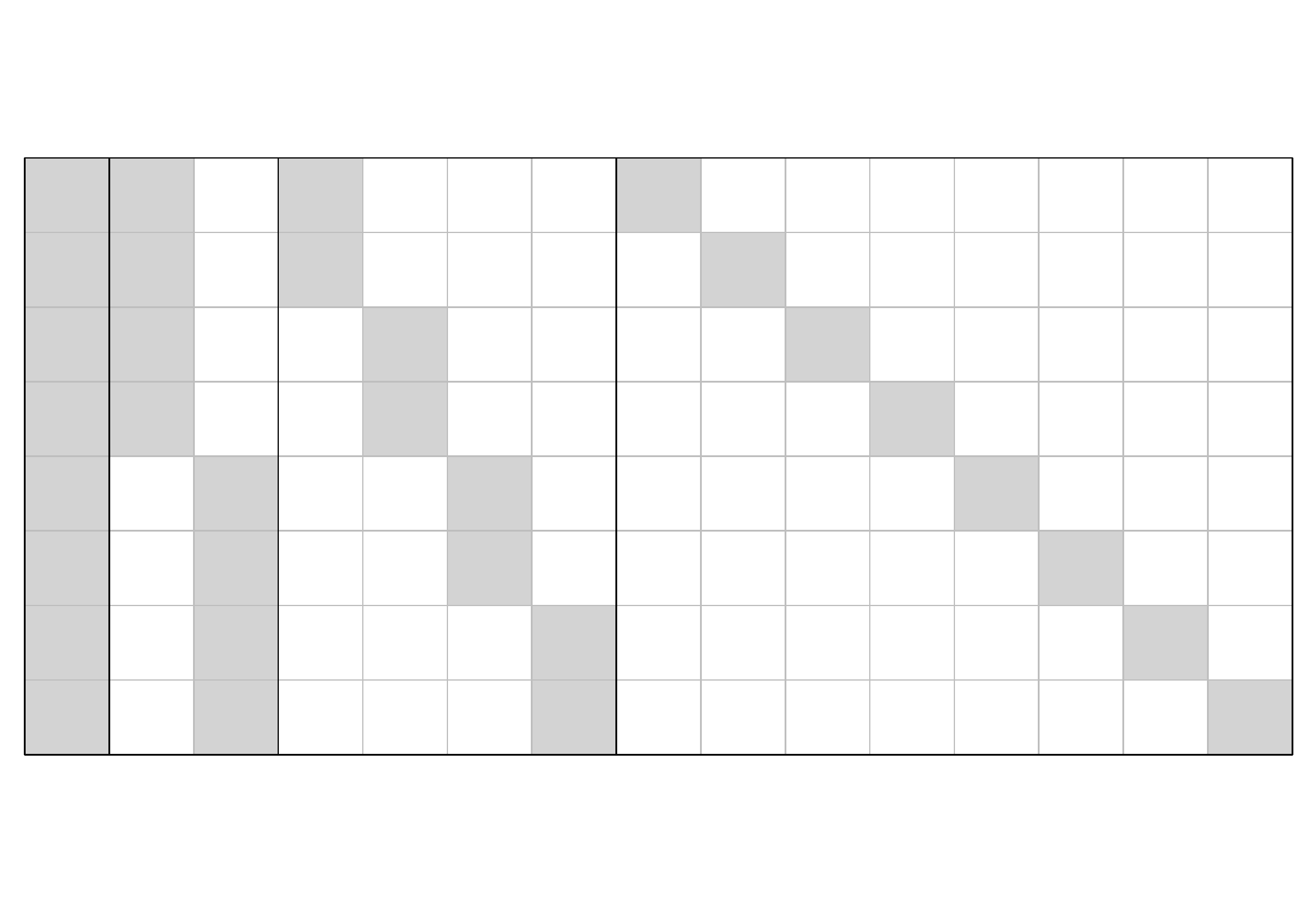}
		\caption{$\bB$ for $M$-RA-block}
		\label{fig:BlockB}
	\end{subfigure}
	\hfill
	\begin{subfigure}{.32\textwidth}
		\centering
		\includegraphics[trim = 20mm 0mm 20mm 0mm, clip,width=.8\linewidth]{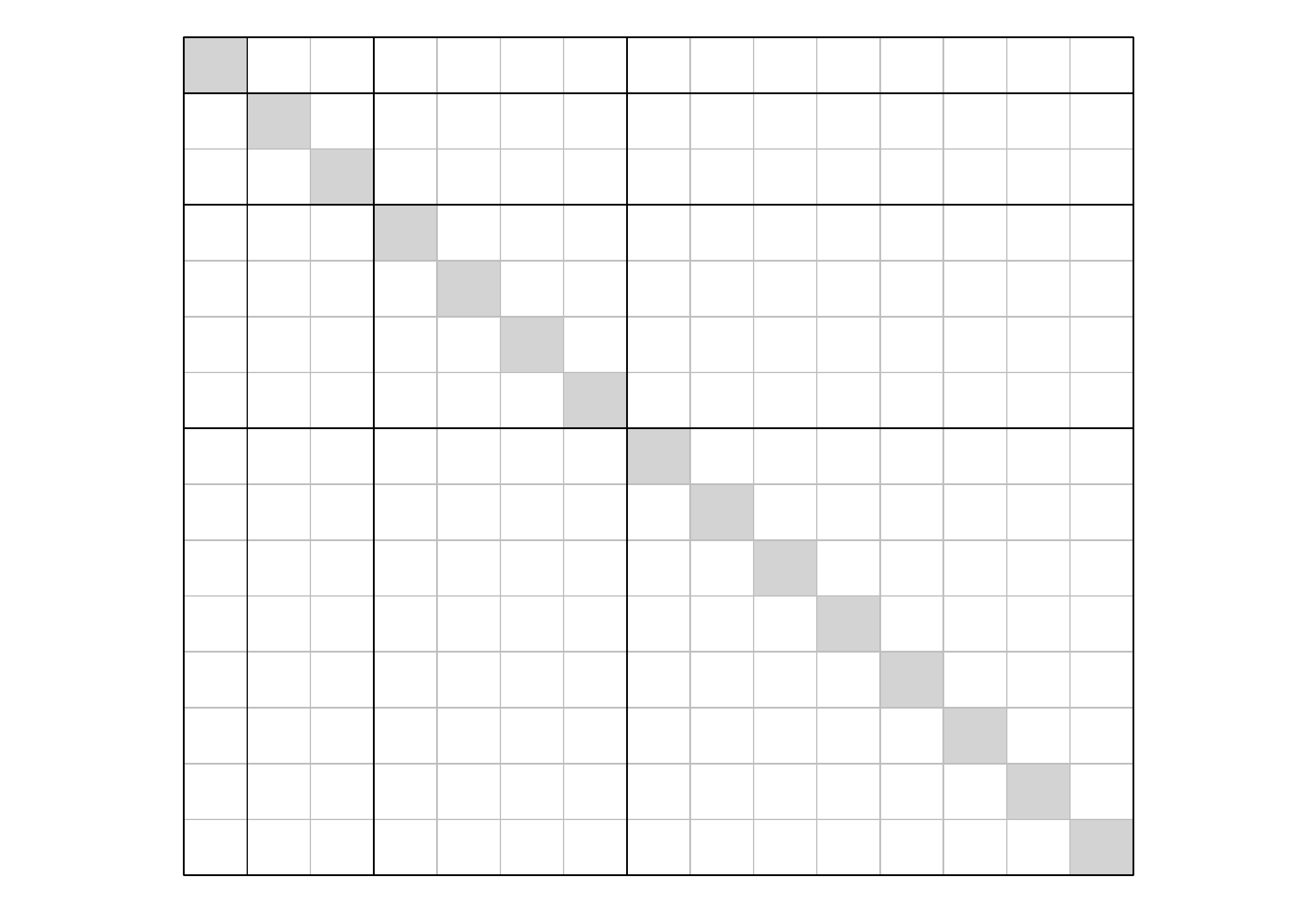}
		\caption{$\prec$ for $M$-RA-block}
		\label{fig:BlockP}
	\end{subfigure}
	\hfill
	\begin{subfigure}{.32\textwidth}
		\centering
		\includegraphics[trim = 20mm 0mm 20mm 0mm, clip,width=.8\linewidth]{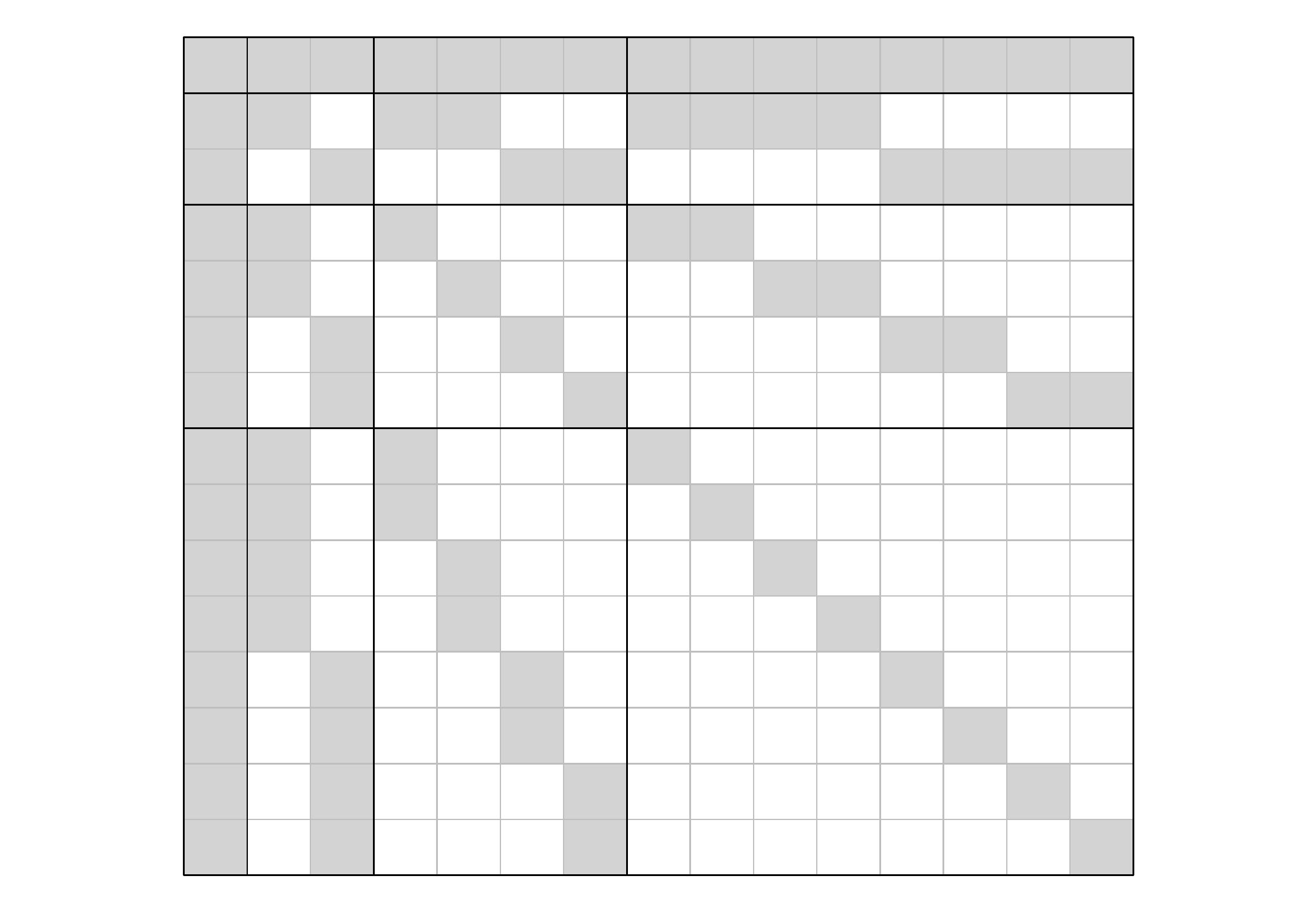}
		\caption{$\pprec$ for $M$-RA-block}
		\label{fig:BlockPp}
	\end{subfigure}
	\caption{Illustration of the sparsity in the matrices $\bB$, $\prec$, and $\pprec$ for the toy example in Figure \ref{fig:toyillus}. Resolutions are separated by solid black lines. Top row: $M$-RA-taper. Bottom row: $M$-RA-block.}
	\label{fig:sparsity}
\end{figure}


\subsubsection{Inference in the absence of measurement error}

If there is no measurement error (i.e., $\bV_\epsilon = \bfzero$), we have
\[
\bz = \by \sim \normal_{n}(\bfzero,\bfsigma).
\]
where $\bfsigma=\bB\prec^{-1}\bB'$. To ensure that $\bB$ (and hence $\bfsigma$) has full rank, we assume for this case that $\locs=\knots$ (and thus $n=r$) and (in light of Proposition \ref{prop:knots}) that the knots are unique.
The likelihood can then be calculated as
$
-2 \log L(\bftheta) = -\log |\bfsigma| -\by'\bfsigma^{-1}\by,
$
where $\log |\bfsigma|= \log|\bB\prec^{-1}\bB'|=\log|\bB|^2-\log|\prec|$, and $\by'\bfsigma^{-1}\by=\widetilde\by'\prec\widetilde\by$ with $\widetilde\by = \bB^{-1}\by$.


\subsection{Inference details for the $M$-RA-block \label{sec:blockinf}}

For the $M$-RA-block from Section \ref{sec:block}, $\bB$, $\prec$, and $\pprec$ are block-sparse matrices, with each block roughly of size $r_0 \times r_0$ and corresponding to (the knots at) a pair of regions. 

As noted in Section \ref{sec:prior}, we have $(\prec_m)_{i,j}=0$ if $\modu_m(\knot_{m,i},\knot_{m,j})=0$, and so $\prec_m$ is a block-diagonal matrix with diagonal blocks $\{v_m(\knots^\jm,\knots^\jm): \jm = 1,\ldots,J\}$, where $\knots^\jm = \{\knot_{m,i}: \knot_{m,i} \in \knots_m \cap \domain_\jm\}$ is the set of roughly $r_0$ knots at resolution $m$ that lie in $\domain_\jm$. It is well known that the inverse $\prec_k^{-1}$ of a block-diagonal matrix $\prec_k$ has the same block-diagonal structure as $\prec_k$, and so the prior calculations in Section \ref{sec:prior} involving $\prec_{k}^{-1}$ can be carried out at low computational cost.

For the posterior covariance matrix, we have from Section \ref{sec:posterior} that $(\pprec_{m,l})_{i,j} = 0$ if $\not\exists \bs \in \domain$ such that $\modu_m(\knot_{m,i},\bs) \neq 0$ and $\modu_l(\knot_{l,j},\bs)\neq 0$, and so the block in $\pprec$ corresponding to regions $\domain_\im$ and $\domain_\jm$ is zero if the regions do not overlap (i.e., if $\domain_\im \cap \domain_\jm = \emptyset$). The Cholesky factor of  a (appropriately reordered) matrix with this particular block-sparse structure has zero fill-in, and can thus be carried out very rapidly. 

\citet{Katzfuss2015} describe an algorithm for inference in a special case of the $M$-RA-block that can be extended to the more general $M$-RA-block considered here. This algorithm is well suited for parallel and distributed computations for massive datasets, and it leads to efficient storage of the full posterior predictive distribution in \eqref{eq:prediction}. The time and memory complexity are shown to be $\order(nM^2r_0^2)$ and $\order(nMr_0)$, respectively.


\subsection{Inference details for the $M$-RA-taper \label{sec:details}}

The case of the $M$-RA-taper from Section \ref{sec:taper} results in sparse matrices, but care must be taken to ensure computational feasibility. A crucial observation for the computational results below is that for any location $\bs \in \domain$ and any resolution $m$, only $\order(r_0)$ knots from $\knots_m$ are within a distance of $d_m$ from $\bs$ (i.e., all sets of the form $\{\knot_{m,i} \in \knots_m: \|\bs - \knot_{m,i}\| \leq d_m\}$ contain only $\order(r_0)$ elements), because we assumed that the $r_m = r_0 J^m$ knots at resolution $m$ are roughly equally spread over the domain $\domain$, and $d_m = d_0 /J^{m/d}$. 

First, consider calculation of the prior matrices as described in Section \ref{sec:prior}. The matrices $\prec$ and $\bB$ have $\order(nr_0)$ and $\order(nMr_0)$ nonzero elements, respectively, because $(\prec_m)_{i,j}=0$ if $\modu_m(\knot_{m,i},\knot_{m,j})=0$, and $(\bB_m)_{i,j} =0$ if $\modu_m(\bs_i,\knot_{m,j})=0$. Before carrying out the actual inference procedures, it is helpful to pre-calculate $\mathcal{I}_{m,l} \colonequals \{(i,j): \modu_l(\knot_{m,i},\knot_{l,j}) \neq 0\}$, the set of nonzero indices of the matrix $\bW^l_{m,l}$, for $l=0,\ldots,m$ and $m=0,\ldots,M$. This can typically be done in $\order(n \log n)$ time \citep[e.g.][]{Vaidya1989}. In the actual inference procedure, we then only need to calculate the $\mathcal{I}_{m,l}$-elements of the matrices $\bW_{m,l}^k$ in \eqref{eq:priorw1}. The main difficulty herein is that while $\prec_k$ is sparse, its inverse $\prec_k^{-1}$ is not. However, we only need to compute certain elements of $\prec_k^{-1}$:
\begin{prop}
\label{prop:priorcomplexity}
For $l=0,\ldots,m$ and $m=0,\ldots,M$, the matrix $\bW_{m,l}^l$ can be obtained by computing 
\begin{equation}
\label{eq:priorwalternate}
\textstyle\bW_{m,l}^{k+1} = \big( \bW_{m,l}^{k} - \bW_{m,k}^{k} \bS_k \bW_{l,k}^{k}{}' \big) \circ \modu_{k+1}(\knots_m,\knots_l), \qquad k=0,\ldots,l-1,
\end{equation}
where $\bS_k = \prec_k^{-1} \circ \bG_k$ and $(\bG_k)_{i,j} = \mathbbm{1}_{\{\| \knot_{m,i} - \knot_{m,j} \| < (2 +2/J) d_m\}}$. Thus, the $(i,j)$ element of $\prec_m^{-1}$ is not required for calculating the prior matrices in \eqref{eq:priorw1} if $\| \knot_{m,i} - \knot_{m,j} \| \geq (2 +2/J) \, d_m$.\\
The total time complexity for computing all prior matrices in \eqref{eq:priorw1} is $\order(n M^2 r_0^3)$, ignoring the cost of computing the $\bS_k$ from the $\prec_k$.
\end{prop}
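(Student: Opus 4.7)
The proposition has two independent claims: the correctness of the substitution $\prec_k^{-1} \mapsto \bS_k$ in \eqref{eq:priorw1}, and the total time bound $\order(nM^2 r_0^3)$. I would address them in that order.

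\emph{Correctness.} First I would establish by induction on $k$ that the building blocks of the recursion carry sharp support constraints: because $v_k$ inherits a factor of $\modu_k$ from the last step of its construction, $(\bW_{m,k}^k)_{i,i_k}=0$ whenever $\|\knot_{m,i}-\knot_{k,i_k}\|\geq d_k$, and likewise $(\bW_{l,k}^k)_{j,j_k}=0$ whenever $\|\knot_{l,j}-\knot_{k,j_k}\|\geq d_k$. Next, because step $k$ of the recursion ends with a Hadamard product against $\modu_{k+1}$, only the output entries $(i,j)$ with $\|\knot_{m,i}-\knot_{l,j}\|<d_{k+1}$ are ever needed. Expanding
\[
(\bW_{m,k}^k \prec_k^{-1} \bW_{l,k}^k{}')_{i,j} = \sum_{i_k,j_k}(\bW_{m,k}^k)_{i,i_k}\,(\prec_k^{-1})_{i_k,j_k}\,(\bW_{l,k}^k)_{j,j_k},
\]
only $(i_k,j_k)$ with $\knot_{k,i_k}$ within $d_k$ of $\knot_{m,i}$ and $\knot_{k,j_k}$ within $d_k$ of $\knot_{l,j}$ can contribute, so the triangle inequality along $\knot_{k,i_k}\to\knot_{m,i}\to\knot_{l,j}\to\knot_{k,j_k}$ forces $\|\knot_{k,i_k}-\knot_{k,j_k}\|<2d_k+d_{k+1}$. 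Since $d_{k+1}=d_k/J^{1/d}\leq 2d_k/J$, this is bounded by $(2+2/J)d_k$, so only the entries of $\prec_k^{-1}$ retained by the indicator mask $\bG_k$ ever matter, and substituting $\bS_k=\prec_k^{-1}\circ\bG_k$ for $\prec_k^{-1}$ leaves every required entry of $\bW_{m,l}^{k+1}$ unchanged, establishing \eqref{eq:priorwalternate}.

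\emph{Complexity.} Combining the standing assumption that every ball of radius $d_m$ contains $\order(r_0)$ resolution-$m$ knots with the scaling $d_m=d_0/J^{m/d}$ yields the uniform sparsity estimates I need: each of $\bW_{m,k}^k$, $\bW_{l,k}^k$, and $\bS_k$ has $\order(r_0)$ nonzeros per row, while the output $\bW_{m,l}^{k+1}$ has $\order(r_0 J^{l-k-1})$ nonzeros per row. I would bound the step-$k$ cost of the $(m,l)$-recursion by doing the triple product as two sparse multiplications: first form the row-$i$ scatter of $\bW_{m,k}^k\bS_k$ at a cost of $\order(r_0^2)$ per row $i$, then contract against the $\order(r_0)$ nonzero columns of $\bW_{l,k}^k$ at each of the $\order(r_0)$ output $j$ admissible under $\modu_{k+1}(\knots_m,\knots_l)$. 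This yields $\order(r_m r_0^3)$ work per step after absorbing the geometric factor in $k$. Summing crudely over $l\leq m$ (one factor of $M$), over $m\leq M$ (another factor of $M$), and using $r_M=\order(n)$ delivers the claimed $\order(nM^2 r_0^3)$.

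\emph{Main obstacle.} The triangle-inequality argument for part (i) is clean once one observes that three compactly supported constraints---two coming from $\modu_k$ through the outer $\bW$-factors, one from the final Hadamard factor $\modu_{k+1}$---must simultaneously be active for a term to contribute; the constant $(2+2/J)$ is a deliberately loose upper bound chosen for notational simplicity rather than the sharp one. The subtler step is the complexity bookkeeping, because the intermediate matrix $\bW_{m,l}^k$ is genuinely denser than $\bW_{m,l}^l$ for small $k$, and one must argue that the per-step work is controlled by the $\order(r_0)$-per-row density of the \emph{outer} factors $\bW_{m,k}^k$ and $\bW_{l,k}^k$ and the mask $\modu_{k+1}$, rather than by the transient density of $\bW_{m,l}^k$ itself.
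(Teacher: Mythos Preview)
Your correctness argument is essentially the paper's: expand the triple product entrywise, use the compact support of $v_k$ (inherited from $\modu_k$) on the two outer factors, restrict to the output entries $(i,j)$ that are actually needed, and bound $\|\knot_{k,i_k}-\knot_{k,j_k}\|$ by the triangle inequality. The only wrinkle is your inequality $d_{k+1}=d_k/J^{1/d}\leq 2d_k/J$, which fails once $J^{1-1/d}>2$; the paper's proof effectively writes $d_{k+1}=d_k/J$ at this point, so neither argument pins down the constant $(2+2/J)$ in full generality, but this does not affect the structure.

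Your complexity argument has a genuine gap. You correctly state that $\bW_{m,l}^{k+1}$, masked only by $\modu_{k+1}$, has $\order(r_0 J^{l-k-1})$ nonzeros per row --- and then contradict yourself two sentences later by claiming ``$\order(r_0)$ output $j$ admissible under $\modu_{k+1}$.'' If you actually compute every $\modu_{k+1}$-admissible entry at step $k$, the per-row cost is $\order(r_0^2 J^{l-k-1})$, and summing over $k$, $l$, $m$ yields a bound quadratic in $n$, not $\order(nM^2 r_0^3)$; the vague ``absorbing the geometric factor in $k$'' does not rescue this. The paper's fix --- stated in Section~\ref{sec:details} just before the proposition --- is to precompute the \emph{final} index set $\mathcal{I}_{m,l}=\{(i,j):\modu_l(\knot_{m,i},\knot_{l,j})\neq 0\}$ and compute only those entries of $\bW_{m,l}^k$ at every intermediate $k$, which is legitimate because the recursion \eqref{eq:priorw1} acts entrywise on $\bW_{m,l}^k$. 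Under that restriction there are $\order(r_0)$ output columns per row at every step, each costing $\order(r_0^2)$, giving $\order(r_m r_0^3)$ per $(m,l,k)$ and the claimed total after the triple sum. Your ``main obstacle'' paragraph spots the right difficulty but names the wrong mask: it is $\modu_l$, not $\modu_{k+1}$, that keeps the per-step output count at $\order(r_0)$.
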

To calculate $\bS_k$ from $\prec_k$, we use a selected inversion algorithm \citep{Erisman1975,Li2008,Lin2011} in which we regard element $(i,j)$ as a structural zero only if $\| \knot_{k,i} - \knot_{k,j} \| \geq (2 +2/J) d_m$. This algorithm the same computational complexity as the Cholesky decomposition of the same matrix. 
For one-dimensional domains ($d=1$), $\prec_k$ is a banded matrix with bandwidth $\order(r_0)$, and so the time complexity to compute its Cholesky decomposition (and selected inverse) is $\order(r_k r_0^2)$ \citep[e.g.,][p.~187]{Gelfand2010}. For $d \ge 2$, the rows and columns of $\prec$ should be ordered such that the Cholesky decomposition leads to a (near) minimal fill-in and hence fast computations. Functions for this reordering are readily available in most statistical or linear-algebra software. The discussion in \citet{Furrer2006} indicates that the resulting time complexity for the Cholesky decomposition is roughly linear in the matrix dimension for $d=2$. Moreover, our numerical experiments showed that the selected inversions only account for a small fraction of the total time required to compute the prior matrices, and so the total computation time for computing the prior matrices scales roughly as $\order(nM^2r_0^3)$.

Once the prior matrices including $\bB$ and $\prec$ have been obtained, posterior inference requires computing and decomposing the posterior precision matrix $\pprec = \prec + \bB'\bV_\epsilon^{-1}\bB$ in \eqref{eq:postdist}, with $(m,l)$th block $\pprec_{m,l} = \prec_m \mathbbm{1}_{\{m=l\}} + \bB_m'\bV_\epsilon^{-1}\bB_l$. The $(j,k)$th element of this block is
\[
\textstyle(\pprec_{m,l})_{j,k} = (\prec_m)_{j,k} \mathbbm{1}_{\{m=l\}} + \sum_{i=1}^n v_m(\bs_i,\knot_{m,j})v_l(\bs_i,\knot_{l,k}) (\bV_\epsilon)_{i,i}^{-1}.
\]
As each of the $n$ $\bs_i$ is within distances of $d_m$ and $d_l$ of $\order(r_0)$ elements of $\knots_m$ and $\knots_l$, respectively, the time complexity to compute $(\bB'\bB)_{m,l}$ is $\order(nr_0^2)$, and hence computing $\pprec$ requires $\order(nM^2r_0^2)$ time. 
\begin{prop}
\label{prop:posteriornz}
The number of nonzero elements in $\pprec$ is $\order(nMr_0)$.
\end{prop}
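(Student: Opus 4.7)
The plan is to bound the nonzeros block by block, using the decomposition $\pprec = (\pprec_{m,l})_{m,l=0,\ldots,M}$ with $\pprec_{m,l} = \prec_m \mathbbm{1}_{\{m=l\}} + \bB_m'\bV_\epsilon^{-1}\bB_l$ recalled in Section \ref{sec:posterior}. Because the recursion \eqref{eq:mradetails} makes $\modu_m$ an explicit factor of $v_m$, both the diagonal contribution $(\prec_m)_{j,k}$ and each summand $v_m(\bs_i,\knot_{m,j})v_l(\bs_i,\knot_{l,k})$ in $\bB_m'\bV_\epsilon^{-1}\bB_l$ vanish unless some $\bs$ lies in $B(\knot_{m,j},d_m)\cap B(\knot_{l,k},d_l)$. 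Taking $m \le l$ without loss of generality (so $d_l \le d_m$), the triangle inequality forces $\|\knot_{m,j}-\knot_{l,k}\| < d_m + d_l \le 2d_m$ as a necessary condition for $(\pprec_{m,l})_{j,k}$ to be nonzero.

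The counting step leans on the quasi-uniformity of $\knots$ already invoked at the start of Section \ref{sec:details}: every ball of radius $d_m$ contains $\order(r_0)$ knots from $\knots_m$. Rescaling by the ratio of ball volumes shows that a ball of radius $2d_m$ contains $\order(r_0 (d_m/d_l)^d) = \order(r_0 J^{l-m})$ knots from $\knots_l$, since $d_m/d_l = J^{(l-m)/d}$. Applied row by row, block $\pprec_{m,l}$ therefore has at most $\order(r_m \cdot r_0 J^{l-m}) = \order(r_0^2 J^l)$ nonzero entries.

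Summing over the upper triangle (the lower triangle contributes an equal amount by symmetry of $\pprec$) yields
\[
\textstyle \sum_{m=0}^M \sum_{l=m}^M r_0^2 J^l \;=\; r_0^2 \sum_{l=0}^M (l+1) J^l \;=\; \order(r_0^2 M J^M),
\]
where the geometric-type sum is dominated by its last term for $J \ge 2$. Since $n = \order(r_M) = \order(r_0 J^M)$ implies $J^M = \order(n/r_0)$, this simplifies to $\order(nMr_0)$, as claimed.

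The only delicate ingredient is the scaled knot-counting bound; everything else is bookkeeping. A pathological knot layout could in principle inflate local counts, so I would formally adopt the same quasi-uniform-knot hypothesis that underlies the paper's prior-complexity calculations, under which the volume-ratio scaling is automatic.
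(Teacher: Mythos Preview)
Your argument is correct and follows essentially the same route as the paper: both use the distance criterion $\|\knot_{m,j}-\knot_{l,k}\| < d_m + d_l$ for a nonzero entry, invoke the quasi-uniform knot assumption to get $\order(r_0 J^{(l-m)_+})$ nonzeros per row of $\pprec_{m,l}$, and then sum the resulting geometric-type series to $\order(r_0^2 M J^M) = \order(nMr_0)$. The only cosmetic difference is that the paper organizes the sum row by row across all $l$, whereas you sum block by block over $m\le l$; the arithmetic is the same.
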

The time complexity for obtaining the Cholesky decomposition of $\pprec$ is difficult to quantify, as it depends on its sparsity structure and the chosen ordering, but again our numerical experiments showed that the contribution of the Cholesky decomposition to the overall computation time is relatively small when appropriate reordering algorithms are used.

For prediction, the posterior covariance $\bB^P \pprec^{-1} \bB^P{}'$ in \eqref{eq:prediction} is dense and hence cannot be obtained explicitly for a large number of prediction locations. But the posterior covariance matrix of a moderate number of linear combinations $\bL \by(\locs^P)$ can be obtained as $(\bL \bB^P) \pprec^{-1}(\bL \bB^P)'$, also based on a Cholesky decomposition of $\pprec$.

In summary, the time and memory complexity of the $M$-RA-taper are $\order(nM^2r_0^3)$ and $\order(nMr_0)$, respectively, plus the cost of computing the Cholesky decompositions of $\prec$ and $\pprec$. These decompositions only accounted for a relatively small amount of the overall computation time in our numerical experiments. Thus, the time complexity of the $M$-RA-taper is roughly cubic in $r_0$ while it is square in $r_0$ for the $M$-RA-block. Note that the computational cost for the $M$-RA-taper can be further reduced if the covariance function $C_0$ has a small effective range relative to the size of $\domain$, because then $C_0$ can be tapered at resolution 0 without causing a large approximation error; in contrast, for the $M$-RA-block, we always have $\modu_0(\bs_1,\bs_2) \equiv 1$.
As explained in \citet{Katzfuss2015}, it is often appropriate to expect a good approximation for $M = \mathcal{O}(\log n)$ (and hence $r_0 = \order(1)$), which results in quasilinear complexity as a function of $n$ for the $M$-RA.


\section{Simulation study  \label{sec:simulation}}

For this section, we used data simulated from a true Gaussian process to compare the $M$-RA-block and $M$-RA-taper to full-scale approximations, FSA-block \citep{Sang2011a} and FSA-taper \citep{Sang2012}, which correspond to the $1$-RA-block and $1$-RA-taper, respectively. An implementation of the methods in Julia (http://julialang.org) version 0.4.5 was run on a 16-core machine with 64G RAM.

The true Gaussian process was assumed to have mean zero and an exponential covariance function, 
\begin{equation}
\label{eq:expcov}
 C_0(\bs_1,\bs_2) = \sigma^2 \exp(-\|\bs_1-\bs_2\|/\kappa), \qquad \bs_1,\bs_2 \in \domain,
\end{equation}
with $\sigma^2=0.95$ and $\kappa = 0.05$ on a one-dimensional ($\domain=[0,1]$) or two-dimensional ($\domain=[0,1]^2$) domain. We assumed a nugget or measurement-error variance of $\tau^2 = 0.05$ (i.e., $\bV_\epsilon = 0.05\, \bI$). Results for Mat\'ern covariances with different range, smoothness, and variance parameters showed similar patterns as those presented below and can be found in the Supplementary Material.

\begin{figure}
	\begin{subfigure}{\textwidth}
		\centering
		\includegraphics[trim = 0mm 67mm 15mm 20mm, clip, width=.95\linewidth]{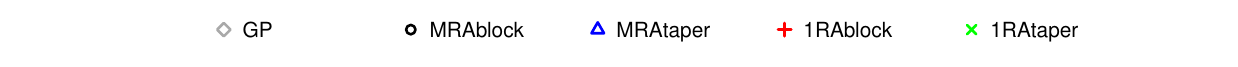}
	\end{subfigure}%
\hfill
	\begin{subfigure}{.5\textwidth}
		\centering
		\includegraphics[width=.95\linewidth]{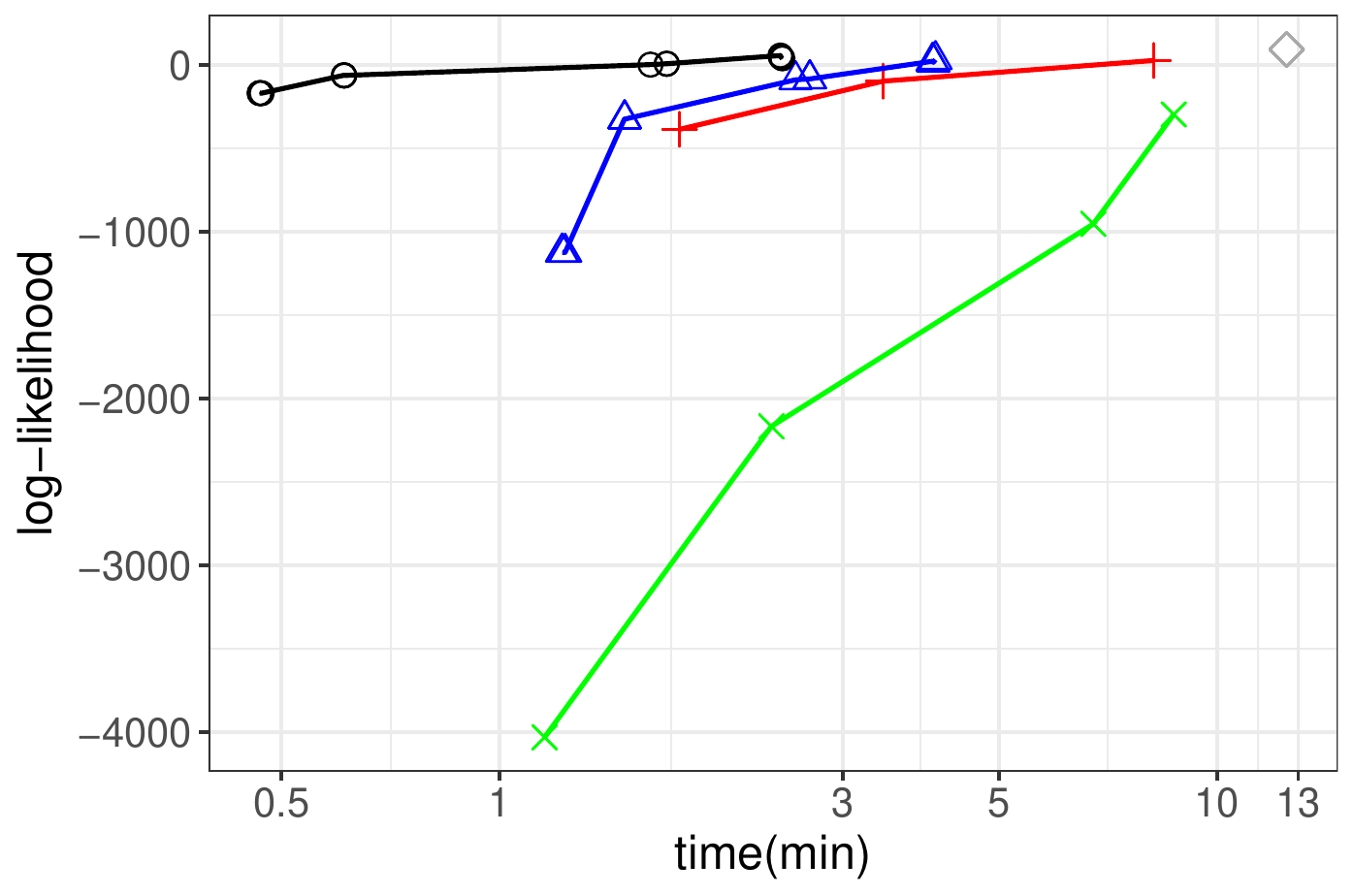}
		\caption{1-D, fixed $n=$ 32,768}
		\label{fig:1DfixN}
	\end{subfigure}%
	\begin{subfigure}{.5\textwidth}
		\centering
		\includegraphics[width=.95\linewidth]{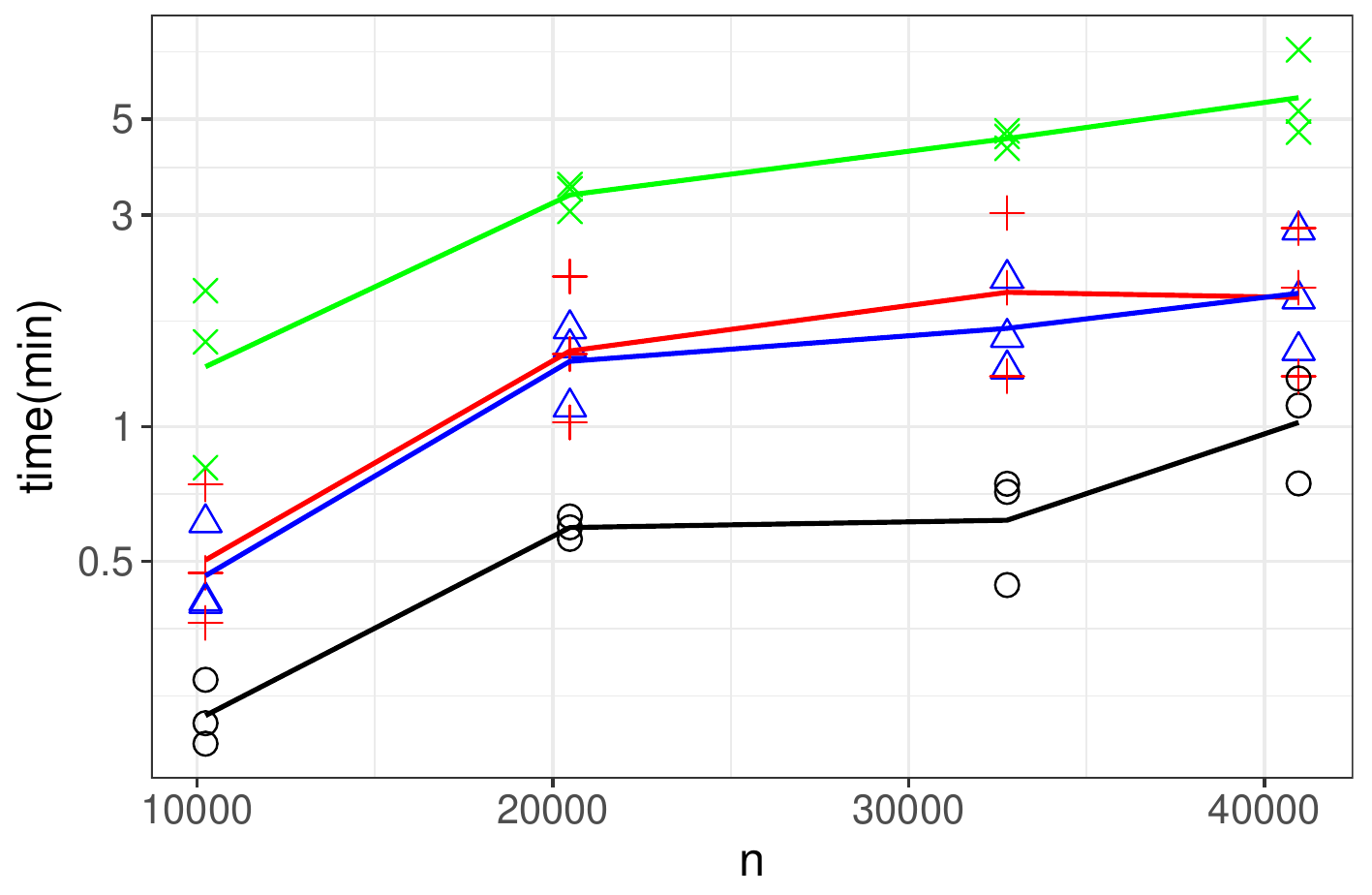}
		\caption{1-D, time for ``close'' approximation}
		\label{fig:1DincN}
	\end{subfigure}

\vspace{3mm}
	\begin{subfigure}{.5\textwidth}
		\centering
		\includegraphics[width=.95\linewidth]{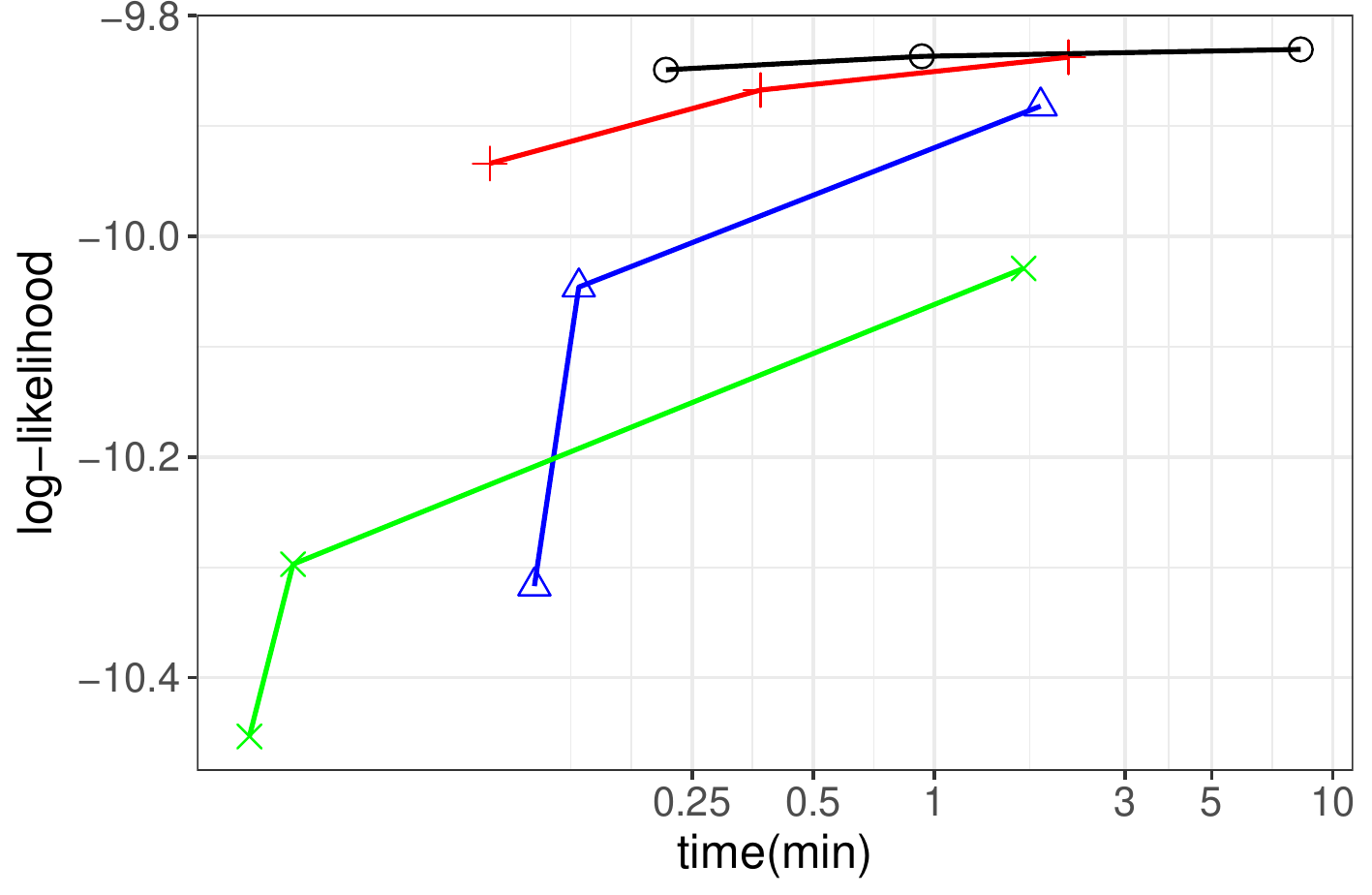}
		\caption{2-D, fixed $n=$ 36,864}
		\label{fig:2DfixN}
	\end{subfigure}%
	\begin{subfigure}{.5\textwidth}
		\centering
		\includegraphics[width=.95\linewidth]{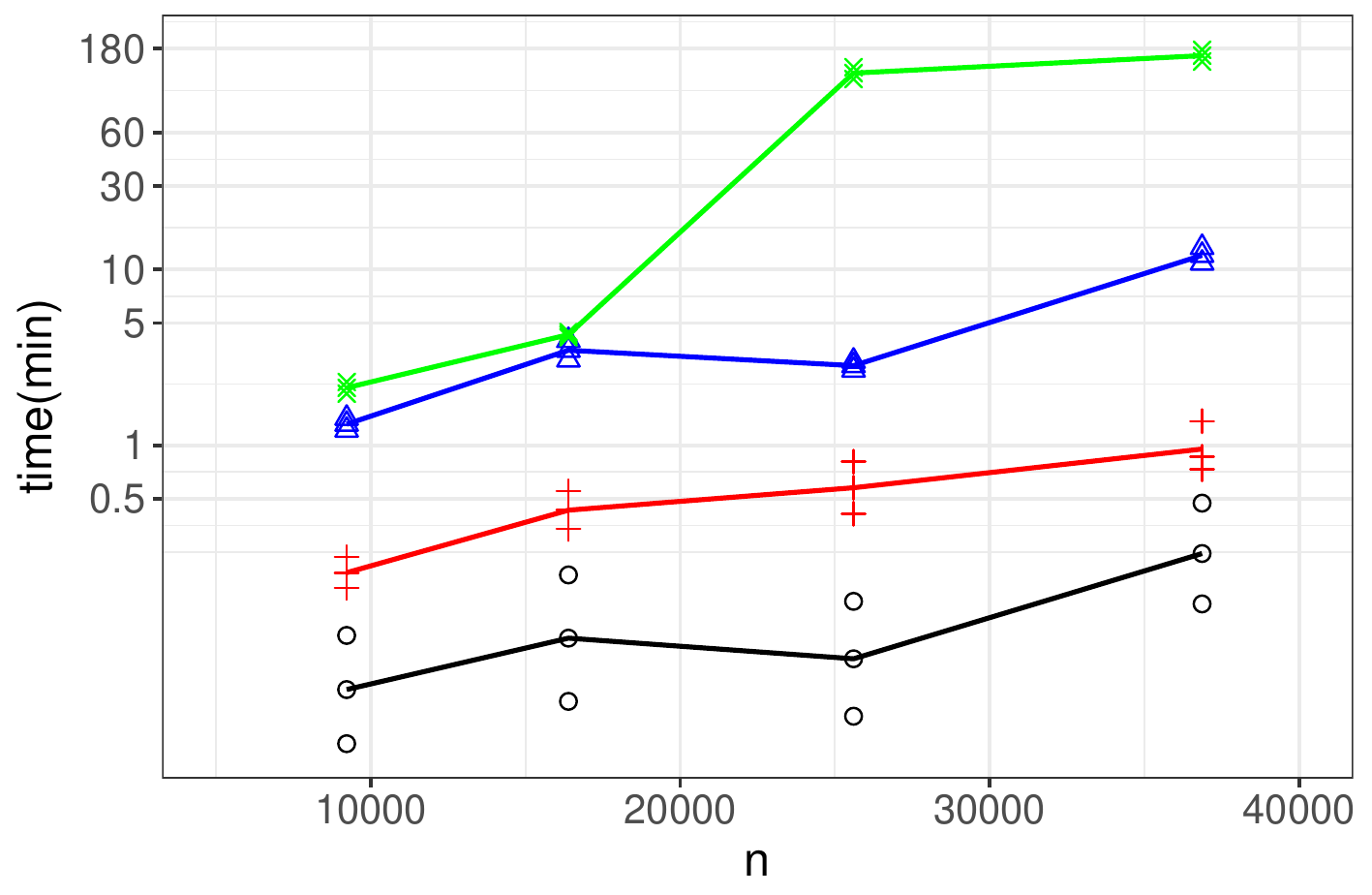}
		\caption{2-D, time for ``close'' approximation}
		\label{fig:2DincN}
	\end{subfigure}
\caption{Summary of results from the simulation study. Top row: $\domain=[0,1]$. Bottom row: $\domain=[0,1]^2$. Left column: Log-score versus computation time for different versions of the $M$-RA for fixed $n$. Right column: Computation time required to get a ``close'' approximation to the truth (or best approximation) for different $n$; lines connect the means of the three times for each model and each $n$. 
Note that all time axes are on a log scale.
Additional results can be found in the Supplementary Material.}
\label{fig:simstudy}
\end{figure}

All comparisons were carried out based on the log-score (i.e., the log-likelihood at the true parameter values), which is a strictly proper scoring rule that is uniquely maximized in expectation by the true model \citep[e.g.,][]{Gneiting2014}. All results were averaged over five replications.

For $M$-RA-taper, some experimentation showed that there are general guidelines to follow in order to get a close approximation to true GP. For a true covariance function $C_0$ with effective range $\rho$, we recommend setting the $M$-RA taper range at resolution $0$ to $d_0 = 2\rho$, and the distance between two adjacent knots at resolution 0 to be at most $\frac{2}{3}$ of $\rho$. For example, the covariance in \eqref{eq:expcov} has an effective range of $\rho \approx 0.15$, and so we set $d_0=0.3$ and the distance between adjacent knots at resolution 0 to $0.1$.

First, we simulated datasets of different sizes on an equidistant grid in one dimension with $\domain=[0,1]$, which permitted fast simulation using the Davies-Harte algorithm and evaluation of the exact likelihood using the Durbin-Levinson algorithm for comparison \citep{McLeod2007}. For each dataset, we recorded the computation times and log-scores for different versions of the $M$-RA (i.e., with different $r_0$, $J$, and $M$). We also considered the computation times to achieve particular levels of approximation accuracy, specifically the time required to obtain an average log-score within a difference of $0.003n$, $0.005n$, and $0.007n$ of the log-score of the true model.
We then repeated the simulation study in two dimensions, $\domain=[0,1]^2$. As it was infeasible to compute the true log-likelihood for large $n$, we use the best approximation (i.e., the largest approximated log-likelihood) as the base to compare the relative performance of different methods, with cut-off values of $0.008n$, $0.01n$, and $0.012n$.

The results are summarized in Figure \ref{fig:simstudy}. The computation times scaled roughly as expected. The $M$-RA-block was consistently better than the other methods, while $M$-RA-taper and 1-RA-block performed similarly. The 1-RA-taper was not competitive.


\section{Application \label{sec:application}}

In this section, we applied the four methods from Section \ref{sec:simulation} to a real satellite dataset. We considered $n=44{,}711$ Level-3 daytime sea surface temperature (SST) data from August 2016 over a region in the North Atlantic Ocean, as measured by the Moderate Resolution Imaging Spectroradiometer on board the Terra satellite. The data are freely available at \url{https://giovanni.gsfc.nasa.gov}.
More specifically, the data (shown in Figure \ref{fig:res}) were taken to be the residuals of the SST data after removing a longitudinal and latitudinal trend. The exploratory analysis showed that an exponential covariance fit the data well, and so all methods used were approximating the covariance in \eqref{eq:expcov}. We assumed a constant noise variance $\tau^2$ (i.e., $\bV_\epsilon = \tau^2 \bI$).

\begin{figure}
\captionsetup{justification=centering}
	\begin{subfigure}{.33\textwidth}
		\centering
		\includegraphics[width =.98\linewidth]{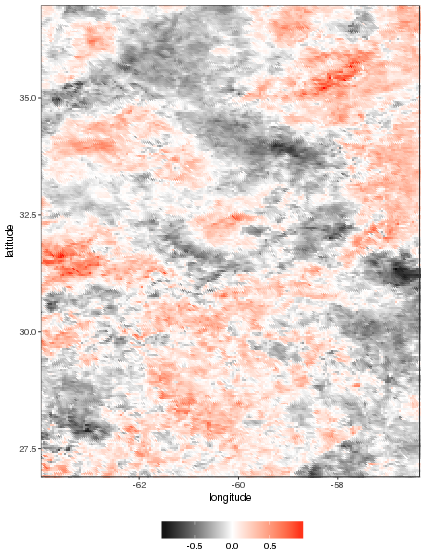}
		\caption{Complete set of SST residuals after removing trend}
		\label{fig:res}
	\end{subfigure}%
	\hfill
	\begin{subfigure}{.33\textwidth}
		\centering
		\includegraphics[width =.98\linewidth]{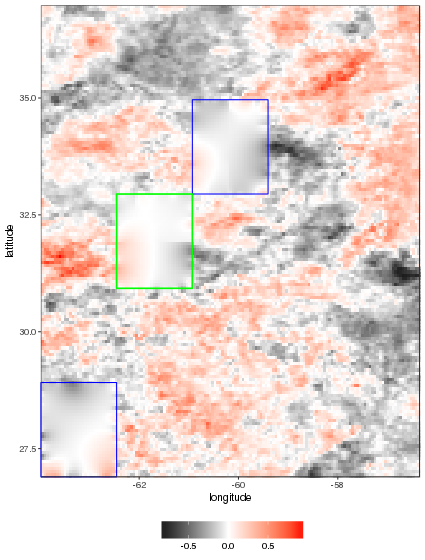}
		\caption{$M$-RA-block with $r_0=121$, $J=4$, $M=4$}
		\label{fig:block1set2}
	\end{subfigure}%
	\hfill
	\begin{subfigure}{.33\textwidth}
		\centering
		\includegraphics[width =.98\linewidth]{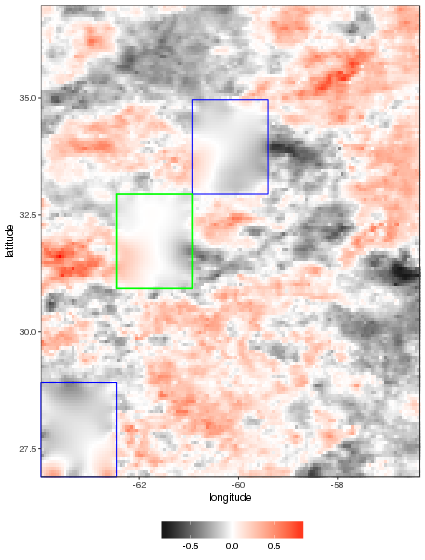}
		\caption{$M$-RA-taper with $r_0=576$, $J=4$, $M=3$}
		\label{fig:taper1set2}
	\end{subfigure}%
	\hfill
	\begin{subfigure}{.33\textwidth}
		\centering
		\phantom{\includegraphics[width =.98\linewidth]{figures/res.png}}
	\end{subfigure}%
	\hfill
	\begin{subfigure}{.33\textwidth}
		\centering
		\includegraphics[width =.98\linewidth]{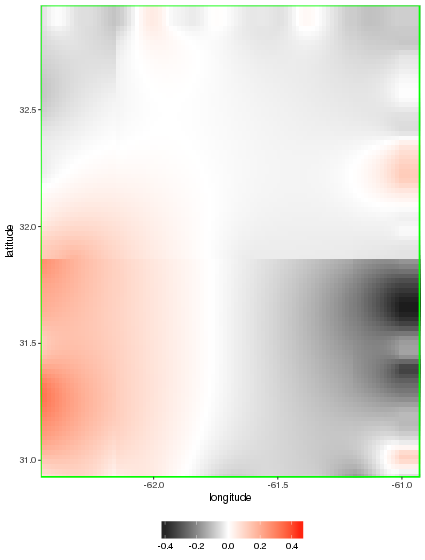}
		\caption{Missing-area prediction for $M$-RA-block}
		\label{fig:zoominblock1}
	\end{subfigure}%
	\hfill
	\begin{subfigure}{.33\textwidth}
		\centering
		\includegraphics[width =.98\linewidth]{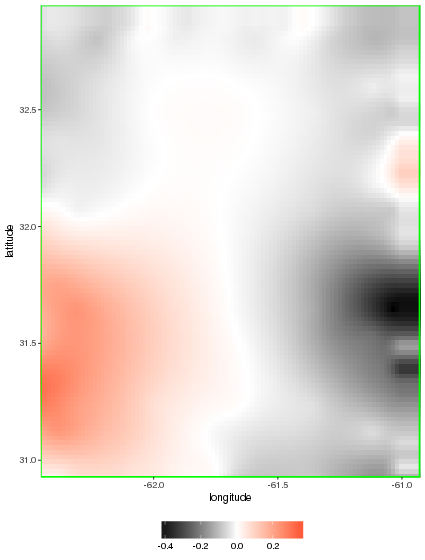}
		\caption{Missing-area prediction for $M$-RA-taper}
		\label{fig:zoomintaper1}
	\end{subfigure}%
\captionsetup{justification=justified}
\caption{Top row: Complete dataset of sea-surface temperature, along with posterior predictive means for $M$-RA-taper and $M$-RA-block based on removing three areal test regions and additional randomly selected values. Bottom row: Zoomed-in view of the green rectangle in the upper prediction plots. Color scales are in units of degrees Celsius.}
\label{fig:prediction1}
\end{figure}

To compare the different approximation methods, we created five different datasets by randomly splitting the complete dataset of residuals into training data, areal test data, and random test data, each containing 78\%, 12\% and 10\%, respectively, of the values in the full data set. The split of the complete data into training and test sets was designed to mimic the typical setting of Level-2 satellite data, with unobserved areas over which the satellite did not fly in a particular time period, and observed areas with some missing values (e.g., due to clouds). Specifically, the areal test locations were obtained by splitting the domain into $5 \times 5 =25$ equal-area rectangles and then removing three of these rectangles at random. The remaining test locations were obtained by simple random sampling of the remaining locations.

Based on each of the five training sets and for a range of settings for each of the four approximation methods, we carried out maximum-likelihood estimation of the unknown parameters $\sigma^2$, $\kappa$, and $\tau^2$, and obtained posterior predictive distributions at the held-out test locations. 
We compared the pointwise (i.e., marginal) posterior distributions obtained by the methods to the held-out test data in terms of the root mean squared prediction error (RMSPE) and the continuous ranked probability score (CRPS), which is a strictly proper scoring rule that quantifies the fit of the entire predictive distribution to the data \citep[e.g.,][]{Gneiting2014}. The scores for the random test data were almost zero for all methods. The scores for the areal test data are shown in Figure \ref{fig:rmspecrps} (averaged over the five datasets). In general, the scores for $M$-RA-taper and $M$-RA-block were better than those for the full-scale approximations. $M$-RA-taper produced some RMSPEs that were even lower than those for $M$-RA-block.

\begin{figure}
	\begin{subfigure}{\textwidth}
		\centering
		\includegraphics[trim = 0mm 85mm 50mm 0mm, clip, width=.7\linewidth]{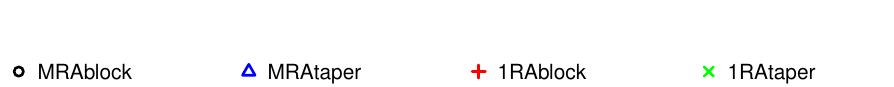}
	\end{subfigure}%
\hfill
	\begin{subfigure}{0.5\textwidth}
	\centering 
	\includegraphics[width =.98\linewidth]{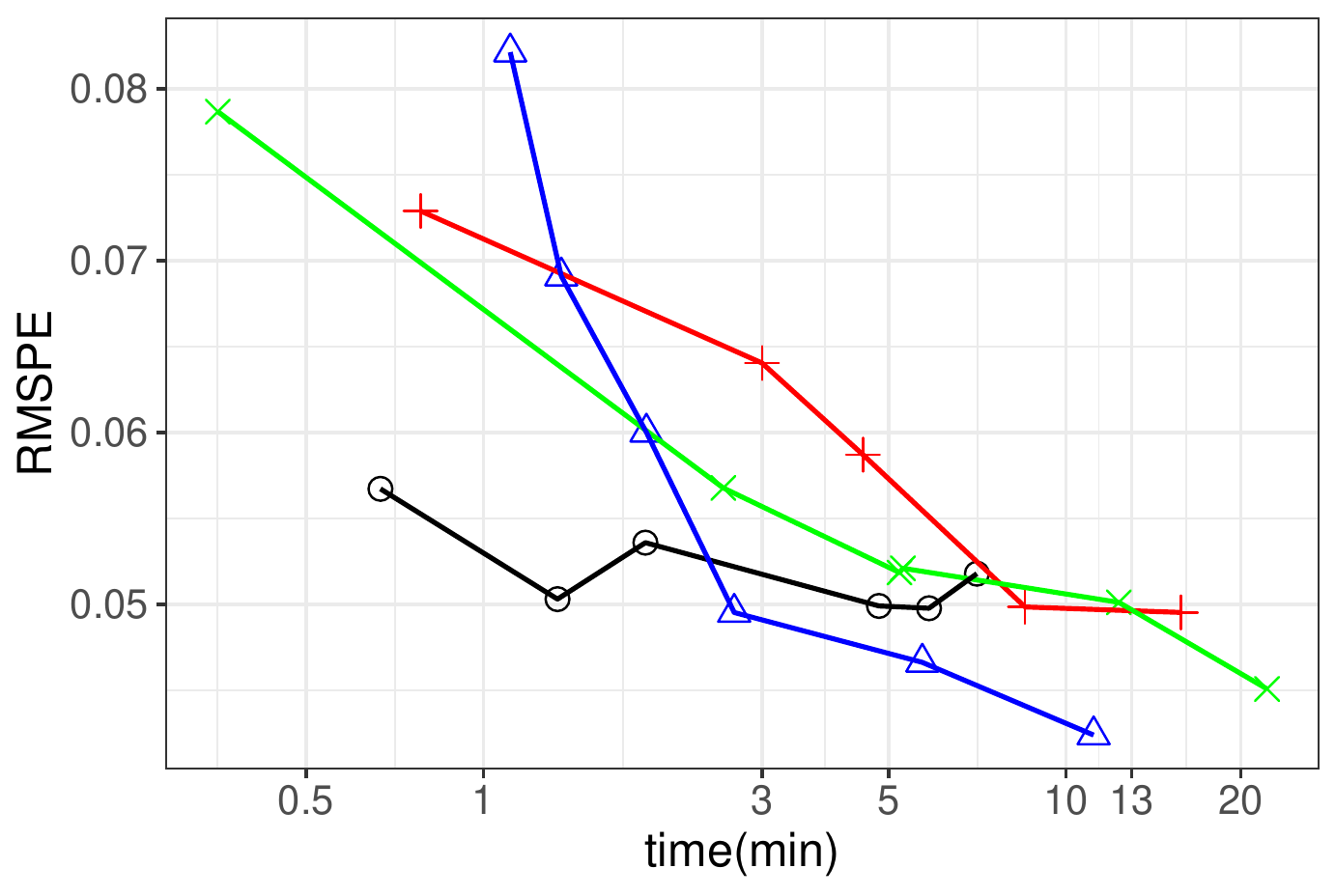}
	\caption{RMSPE}
	\label{fig:sstrmspe}
	\end{subfigure}%
\hfill
	\begin{subfigure}{0.5\textwidth}
		\centering
		\includegraphics[width =.98\linewidth]{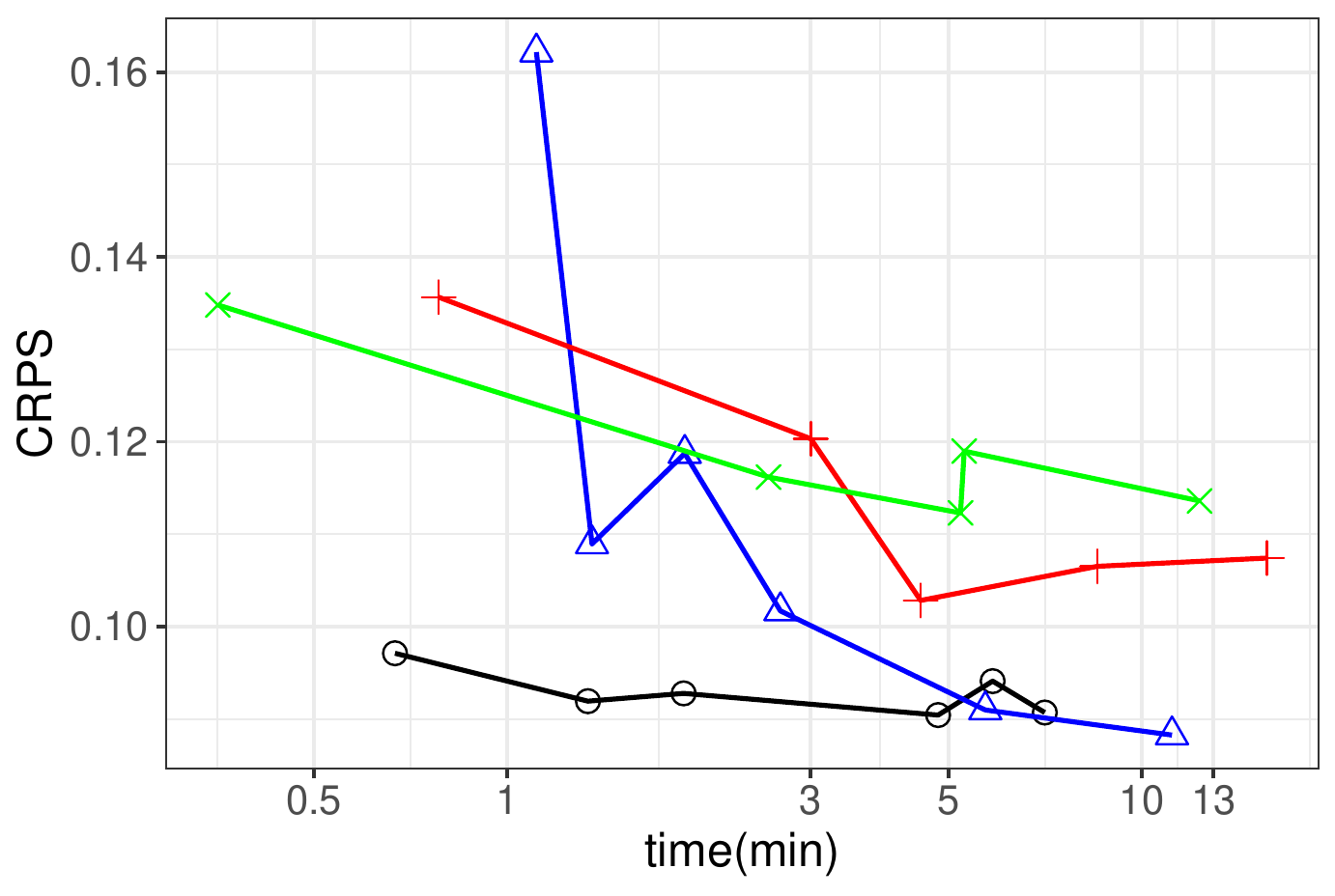}
		\caption{CRPS}		
	\end{subfigure}%
\caption{For the satellite SST data, comparison of scores (lower is better) for predictions of areal test data for different settings of the $M$-RA}
\label{fig:rmspecrps}
\end{figure}

Maybe more important than the differences in prediction scores are the differences in the prediction plots. Figure \ref{fig:prediction1} shows an example of the posterior means as obtained by $M$-RA-taper and $M$-RA-block, for versions of the two methods that took a similar time to run (5 to 7 minutes) and resulted in similar RMSPEs in Figure \ref{fig:sstrmspe}. Despite the good approximation accuracy and low RMSPE of $M$-RA-block, we can see in Figure \ref{fig:zoominblock1} that there are clearly visible artifacts due to discontinuities of the $M$-RA-block at the region boundaries (see Proposition \ref{prop:continuity}), which do not appear for the continuous $M$-RA-taper in Figure \ref{fig:zoomintaper1}. Avoiding these kinds of ``non-physical'' artifacts is often of paramount importance to domain scientists.


\section{Conclusions \label{sec:conclusions}}
	
We have proposed and studied a general approach for obtaining multi-resolution approximations of Gaussian processes (GPs) based on an orthogonal decomposition of the GP of interest into processes at multiple resolutions. We considered two specific cases of this approach: The $M$-RA-taper achieves sparsity and computational feasibility by applying increasingly compact isotropic tapering functions as the resolution increases, while the $M$-RA-block is based on a recursive block-partitioning of the spatial domain and assumes conditional independence between the spatial subregions at each resolution. We have provided algorithms for inference, along with computational complexity of the methods. Within our framework, one could also consider other partitioning schemes or nonstationary tapering, which might be especially useful when approximating nonstationary processes.

We have shown theoretically and numerically that both $M$-RA versions have useful properties and can outperform related existing approaches. The $M$-RA-block achieves more accurate approximations to a given covariance function for a given computation time, and its block-sparse structure allows it to approximate the likelihood for truly massive datasets on modern distributed computing systems. However, the $M$-RA-block process is discontinuous at the subregion boundaries, which can be undesirable in prediction maps.
The $M$-RA-taper can be useful for real-world applications in which the true covariance function is unknown anyway, and hence it might be more important to have a ``smooth'' model that avoids the potential artifacts and discontinuities inherent to the $M$-RA-block due its domain partitioning. The $M$-RA-taper's prediction accuracy can be highly competitive, especially when the effective range of the covariance model is small relative to the domain size. Also note that posterior inference involving the $M$-RA-taper only requires general sparse matrices, which would allow for relatively straightforward treatment of areal-averaged measurements (e.g., satellite footprints). 
	
Future work will consider multivariate, spatio-temporal, and non-Gaussian extensions of the methodology. Also of interest is more precise quantification of the approximation error, and a further investigation of how to choose the number of resolutions and the knots depending on the covariance to be approximated. While our methods are, in principle, also applicable in the context of GP regression, some additional consideration of the choice of knots and partitions in high-dimensional covariate spaces would be warranted.

\footnotesize
\appendix
\section*{Acknowledgments}

The authors were partially supported by National Science Foundation (NSF) Grant DMS--1521676. Katzfuss' research was also partially supported by NSF CAREER Grant DMS--1654083. We would like to thank Tim Davis, Shiyuan He, and several anonymous reviewers for helpful comments. The data used in the application were acquired as part of the mission of NASA's Earth Science Division and archived and distributed by the Goddard Earth Sciences (GES) Data and Information Services Center (DISC).

\section{Proofs \label{app:proofs}}

In this section, we provide proofs for the propositions stated throughout the manuscript. We also state and prove three lemmas that are used in the proofs of the propositions.

\begin{proof}[Proof of Proposition \ref{prop:mracov}]
From \eqref{eq:stacked}, we have $y_M(\cdot) = \bb(\cdot)'\bfeta$, where $\bfeta \sim \normal_{r}(\bfzero,\prec^{-1})$ and $\bb(\cdot)$ is a vector of deterministic functions (for given $C_0$, $\knots$, and $\modu$). Hence, it is trivial to show that $y_M(\cdot)$ is a Gaussian process with mean zero. The covariance function is derived by combining the expression for $y_M(\cdot)$ on the right-hand side of \eqref{eq:mra1} with the equations in \eqref{eq:mradetails}.
\end{proof}

\begin{lemma}[Exact predictive process]
\label{lem:pp}
The predictive process is exact at any knot location; that is, if $x^{(m)}(\cdot)$ is the predictive process of $x(\cdot) \sim \GP(0,C)$ based on knots $\knots_m$ (see Definition \ref{def:pp}), and $\bs_1 \in \knots_m$ (or $\bs_2 \in \knots_m$), then 
\[
\cov\big(x^{(m)}(\bs_1),x^{(m)}(\bs_2)\big) = C(\bs_1,\bs_2).
\]
\end{lemma}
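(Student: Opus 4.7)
The plan is to compute the covariance of $x^{(m)}$ directly from its basis-function representation in Definition \ref{def:pp} and exploit the fact that, when evaluated at a knot location, the ``basis vector'' $\bb(\bs)$ coincides with a column of the kernel matrix $\prec = C(\knots_m,\knots_m)$.

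First I would write $x^{(m)}(\bs) = \bb(\bs)'\bfeta$ with $\bb(\bs)' = C(\bs,\knots_m)$ and $\bfeta \sim \normal_{r_m}(\bfzero,\prec^{-1})$. Because $\bfeta$ has mean zero, this immediately gives
\[
 \cov\big(x^{(m)}(\bs_1), x^{(m)}(\bs_2)\big) = \bb(\bs_1)'\prec^{-1}\bb(\bs_2) = C(\bs_1,\knots_m)\,C(\knots_m,\knots_m)^{-1}\,C(\knots_m,\bs_2).
\]

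Next, I would use the assumption $\bs_1 \in \knots_m$, say $\bs_1 = \knot_{m,i}$. Then the row vector $C(\bs_1,\knots_m)$ is precisely the $i$th row of $\prec = C(\knots_m,\knots_m)$, so that $C(\bs_1,\knots_m)\,\prec^{-1} = \be_i'$, the $i$th standard basis (row) vector in $\mathbb{R}^{r_m}$. Multiplying on the right by $C(\knots_m,\bs_2)$ just picks out the $i$th entry of that column vector, namely $C(\knot_{m,i},\bs_2) = C(\bs_1,\bs_2)$, which is the desired identity. The case $\bs_2 \in \knots_m$ follows by the symmetry of $C$ (or by the same argument applied on the right).

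There is essentially no obstacle here: the result reduces to the elementary linear-algebraic fact that $A A^{-1} = I$, read off one row at a time. The only thing worth being careful about is invertibility of $\prec$, which is guaranteed since $C$ is positive definite and the knots in $\knots_m$ are assumed distinct.
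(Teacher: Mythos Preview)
Your argument is correct and complete: the covariance of the predictive process is $C(\bs_1,\knots_m)\,\prec^{-1}\,C(\knots_m,\bs_2)$, and when $\bs_1=\knot_{m,i}$ the leading factor is the $i$th row of $\prec$, so the product collapses to $C(\bs_1,\bs_2)$.

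The paper proceeds differently. Rather than manipulating the explicit kernel formula, it invokes the law of total covariance,
\[
\cov\big(\E(x(\bs_1)\mid\bx(\knots_m)),\,\E(x(\bs_2)\mid\bx(\knots_m))\big)
= \cov\big(x(\bs_1),x(\bs_2)\big) - \E\big(\cov(x(\bs_1),x(\bs_2)\mid\bx(\knots_m))\big),
\]
and then observes that the conditional covariance on the right vanishes whenever one of $\bs_1,\bs_2$ lies in $\knots_m$. Your route is purely linear-algebraic and arguably more transparent for this particular statement; it also makes explicit why invertibility of $\prec$ is needed. The paper's route is more probabilistic and sidesteps any direct matrix manipulation, relying instead on the conditional-expectation definition of the predictive process. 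Both arrive at the result in one step, so neither has a real advantage in length or difficulty.
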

\begin{proof}[Proof of Lemma \ref{lem:pp}]
By the law of total covariance, we have
\begin{align*}
\cov\big(x^{(m)}(\bs_1),x^{(m)}(\bs_2)\big) & = \cov\big(\E(x(\bs_1)|\bx(\knots_m), \E(x(\bs_2)|\bx(\knots_m) \big) \\
& = \cov\big(x(\bs_1),x(\bs_2)\big) -\E\big(\cov(x(\bs_1),x(\bs_2)|\bx(\knots_m) ) \big) = C(\bs_1,\bs_2),
\end{align*}
because $\cov(x(\bs_1),x(\bs_2)|\bx(\knots_m) )=0$ if $\bs_1 \in \knots_m$ (or $\bs_2 \in \knots_m$).
\end{proof}

\begin{proof}[Proof of Proposition \ref{prop:knots}]
The proof will be carried out by induction. For $l=1$, we have $v_{m+1}(\knot,\bs) = \big(v_m(\knot,\bs) - \cov(\ppm(\knot),\ppm(\bs))\big) \modu_{m+1}(\knot,\bs)=0$, because using Lemma \ref{lem:pp}, we can see that $\cov\big(\ppm(\knot),\ppm(\bs)\big)=\cov\big(\deltam^{(m)}(\knot),\deltam^{(m)}(\bs)\big) = \cov\big(\deltam(\knot),\deltam(\bs)\big) = v_m(\knot,\bs)$. For $l>1$, assuming that $v_{m+l-1}(\knot,\bs) =0$, we have 
\[
v_{m+l}(\knot,\bs)= \big(v_{m+l-1}(\knot,\bs) - \bb_{m+l-1}(\knot)'\prec_{m+l-1}^{-1} \bb_{m+l-1}(\bs)\big)\cdot\modu_{m+l}(\knot,\bs)=0,
\]
because $\bb_{m+l-1}(\knot)=v_{m+l-1}(\knot,\knots_{m+l-1})=\bfzero$.
\end{proof}

\begin{lemma}[$M$-RA covariance at knot location $\bs$]
\label{lem:covknot}
If $\bs_1 \in \knots$, then 
\[
\textstyle C_M(\bs_1,\bs_2) = \sum_{m=0}^{M-1} v_m(\bs_1,\knots_m)v_m(\knots_m,\knots_m)^{-1}v_m(\knots_m,\bs_2) + v_M(\bs_1,\bs_2), \quad \bs_2 \in \domain.
\]
\end{lemma}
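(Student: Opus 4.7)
The plan is to compare the claimed formula directly against the covariance expression already derived in Proposition \ref{prop:mracov}, namely $C_M(\bs_1,\bs_2) = \sum_{m=0}^{M} v_m(\bs_1,\knots_m)v_m(\knots_m,\knots_m)^{-1}v_m(\knots_m,\bs_2)$. Subtracting the two expressions, it suffices to show that, whenever $\bs_1 \in \knots$,
\[
 v_M(\bs_1,\knots_M)\,v_M(\knots_M,\knots_M)^{-1}\,v_M(\knots_M,\bs_2) \;=\; v_M(\bs_1,\bs_2).
\]
So the whole proof reduces to verifying this identity, and then recognizing that the remaining terms of the two formulas agree term by term.

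I would split into two cases according to the resolution at which $\bs_1$ first appears as a knot. Suppose $\bs_1 \in \knots_{m^*}$.

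In the case $m^* < M$, I invoke Proposition \ref{prop:knots} with $l = M - m^* \geq 1$ to conclude that $v_M(\bs_1,\bs) = 0$ for every $\bs \in \domain$. In particular, the right-hand side of the identity vanishes because $v_M(\bs_1,\knots_M) = \bfzero$, and the left-hand side $v_M(\bs_1,\bs_2)$ also vanishes, so the identity holds trivially. (The same reasoning kills the $m = M$ term of Proposition \ref{prop:mracov} and the residual $v_M(\bs_1,\bs_2)$ term of the lemma simultaneously, which is the cleanest way to see equivalence.)

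In the case $m^* = M$, Proposition \ref{prop:knots} gives no information, and this is the step that actually needs work. Here I would appeal to Lemma \ref{lem:pp} applied to the Gaussian process $\deltam_M(\cdot) \sim \GP(0,v_M)$, which is well-defined as a GP by construction in Definition \ref{def:mra}. Treating $v_M$ as the covariance function $C$ in Lemma \ref{lem:pp} and $\knots_M$ as the knot set, the predictive-process covariance $v_M(\bs_1,\knots_M)v_M(\knots_M,\knots_M)^{-1}v_M(\knots_M,\bs_2)$ equals $v_M(\bs_1,\bs_2)$ whenever $\bs_1 \in \knots_M$, which is exactly the identity needed. Combining the two cases gives the lemma.

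The only mildly subtle point, and the one I expect to be the main obstacle to state cleanly, is ensuring that Lemma \ref{lem:pp} is applicable to $v_M$, i.e.\ that $v_M$ is a bona fide positive-definite covariance function (so that the conditional-expectation interpretation of the predictive process is valid). This is already guaranteed by the recursive construction in \eqref{eq:mradetails}, where each $v_{m+1}$ is obtained as the covariance of a Gaussian process (the modulated remainder), and in particular $\deltam_M \sim \GP(0,v_M)$ is invoked in Definition \ref{def:mra}. Everything else is a bookkeeping exercise in aligning terms between the two sums.
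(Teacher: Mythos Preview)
Your proposal is correct and follows essentially the same route as the paper's proof: both reduce the claim to the identity $v_M(\bs_1,\knots_M)v_M(\knots_M,\knots_M)^{-1}v_M(\knots_M,\bs_2) = v_M(\bs_1,\bs_2)$ for $\bs_1 \in \knots$, then split into the cases $\bs_1 \in \knots_M$ (handled by Lemma~\ref{lem:pp}) and $\bs_1 \in \knots_m$ with $m<M$ (handled by Proposition~\ref{prop:knots}, since both sides vanish). Your version is slightly more detailed in justifying why Lemma~\ref{lem:pp} applies to $v_M$; note only that you have interchanged ``left-hand side'' and ``right-hand side'' in the $m^*<M$ paragraph, a cosmetic slip with no effect on the argument.
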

\begin{proof}[Proof of Lemma \ref{lem:covknot}]
In the expression for $C_M$ in Proposition \ref{prop:mracov}, we have $v_M(\bs_1,\knots_M)v_M(\knots_M,\knots_M)^{-1}v_M(\knots_M,\bs_2) = v_M(\bs_1,\bs_2)$ for $\bs_1 \in \knots$. This follows from Lemma \ref{lem:pp} if $\bs_1 \in \knots_M$, and from Proposition \ref{prop:knots} for $\bs_1 \in \knots_m$ for $m<M$ (because then both sides of the equation are zero).
\end{proof}

\begin{proof}[Proof of Proposition \ref{prop:variance}]
Because $\modu_m(\bs,\bs)=1$ for all $m=0,1,\ldots,M$, we have $v_0(\bs,\bs)=C_0(\bs,\bs)$ and $v_{m+1}(\bs,\bs) =v_{m}(\bs,\bs) - \bb_m(\bs)'\prec_m^{-1} \bb_m(\bs)$ for $m=1,\ldots,M$. Thus, we can write $v_M(\bs,\bs) = C_0(\bs,\bs) - \sum_{m=0}^{M-1} \bb_m(\bs)'\prec_m^{-1}\bb_m(\bs)$, and using Lemma \ref{lem:covknot}, we have
\[
\textstyle C_M(\bs,\bs) = \sum_{m=0}^{M-1} \bb_m(\bs)'\prec_m^{-1}\bb_m(\bs) + C_0(\bs,\bs) - \sum_{m=0}^{M-1} \bb_m(\bs)'\prec_m^{-1}\bb_m(\bs) = C_0(\bs,\bs).
\]
\end{proof}


\begin{proof}[Proof of Proposition \ref{prop:smoothness}]
First, note that $y_0(\cdot)$ is $p$ times (mean-square) differentiable at $\bs$ if and only if $C_{0,\bs}(\bh) \colonequals C_0(\bs,\bs+\bh)$ is $2p$ times differentiable at the origin ($2p$DO).

By Lemma \ref{lem:covknot}, we have $C_{M,\bs}(\bh) \colonequals C_M(\bs,\bs+\bh)=\sum_{m=0}^{M-1}f_m(\bs,\bs+\bh) + v_M(\bs,\bs+\bh)$, where $f_m(\bs_1,\bs_2) \colonequals \sum_{j=1}^{r_m}a_{m,j}(\bs_1)v_m(\knot_{m,j},\bs_2)$, and $a_{m,j}(\bs)$ is the $j$-th element of the vector $\ba_m(\bs)=v_m(\knots_m,\knots_m)^{-1}v_m(\knots_m,\bs)$. We now show by induction for $m=0,\ldots,M-1$ that
\begin{idea}
\label{eq:statement}
  $v_{m,\knot,\bs}(\bh) \colonequals v_m(\knot,\bs+\bh)$ (for any $\knot \in \knots$) and $f_{m,\bs}(\bh) \colonequals f_m(\bs,\bs+\bh)$ are at least $2p$DO, and $v_{m,\bs,\bs}(\bh)$ is exactly $2p$DO.
\end{idea}
For $m=0$, $v_{0,\knot,\bs}(\bh) =C_0(\knot,\bs+\bh) \cdot \modu_0(\knot,\bs+\bh)$ is at least $2p$DO by assumption and hence so is $f_{0,\bs}(\bh)=\sum_{j=1}^{r_0}a_{0,j}(\bs)v_0(\knot_{0,j},\bs+\bh)$. Further, $v_{0,\bs,\bs}(\bh)$ is exactly $2p$DO. Now assume that (\ref{eq:statement}) holds for $m$. Then, using Equation \ref{eq:decompdetails}, $v_{m+1,\knot,\bs}(\bh)=\big( v_{m,q,\bs}(\bh) - f_m(\knot,\bs+\bh) ) \cdot \modu_{m+1}(\knot,\bs+\bh)$, which is at least $2p$DO, and so is $f_{m+1,\bs}(\bh)=\sum_{j=1}^{r_{m+1}}a_{m+1,j}(\bs)v_{m+1,\knot_{m,j},\bs}(\bh)$. Also, $v_{m+1,\bs,\bs}(\bh)$ is exactly $2p$DO. This proves (\ref{eq:statement}) for $m=1,\ldots,M$.

In summary, we have $C_{M,\bs}(\bh)= \sum_{m=0}^{M-1}f_{m,\bs}(\bh) + (v_{M-1,\bs,\bs}(\bh)-f_{M-1,\bs}(\bh))\cdot \modu_M(\bs,\bs+\bh)$, where $\modu_{M,\bs}(\bh)=\modu_M(\bs,\bs+\bh)$ and $f_{m,\bs}(\bh)$, $m=0,\ldots,M-1$, are all at least $2p$DO, and $v_{M-1,\bs,\bs}(\bh)$ is exactly $2p$DO.

Thus, $C_{M,\bs}(\bh)= C_M(\bs,\bs+\bh)$ is $2p$DO, and so the corresponding $M$-RA process $y_M(\cdot) \sim GP(0,C_M)$ is $p$ times (mean-square) differentiable at $\bs$.
\end{proof}

\begin{proof}[Proof of Proposition \ref{prop:continuity}]
First, note that realizations are (mean-square) continuous at $\bs \in \domain$, if $\lim_{\bh \rightarrow \bfzero} C_M(\bs,\bs+\bh) = C_M(\bs,\bs)$. Further, we have $\mu_M(\bs) = \E(y_M(\bs)|\bz) = \bz'\cov(\bz)^{-1}C_M(\locs,\bs)$. From the proof of Proposition \ref{prop:smoothness}, we have that $C_M(\bs_0,\bs+\bh)= \sum_{m=0}^M \sum_{j=1}^{r_m} a_{m,j}(\bs_0) v_m(\knot_{m,j},\bs+\bh)$. It is straightforward to show using a proof by induction very similar to that for Proposition \ref{prop:smoothness}, that $\lim_{\bh \rightarrow \bfzero} v_m(\knot_{m,j},\bs+\bh) = v_m(\knot_{m,j},\bs)$ if $\lim_{\bh \rightarrow \bfzero} \modu_m(\knot_{m,j},\bs+\bh) = \modu_m(\knot_{m,j},\bs)$ for all $m$. In contrast, if $\bs$ is on a region boundary, at least one $\modu_m(\knot_{m,j},\bs+\bh)$ will be discontinuous as a function of $\bh$, and so will $C_M(\bs_0,\bs+\bh)$ (unless $v_m(\bs,\bs+\bh)=w_m(\bs,\bs+\bh)$ and hence the $M$-RA-block is exact --- see Proposition \ref{prop:expexact}).
\end{proof}

\begin{lemma}[Sum of predictive processes]
\label{lem:sumpp}
For the decomposition in \eqref{eq:orthdecomp}, the sum of predictive processes up to resolution $m$ is equal in distribution to the predictive process based on the union of the knots up to resolution $m$, for any $m=0,1,\ldots,M$; that is, $\sum_{l=0}^{m}\tau_l(\cdot) \eqd E(y_0(\cdot)|y_0(\cup^{m}_{l=0}\knots_l))$.
\end{lemma}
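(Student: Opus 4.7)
The plan is to identify $s(\cdot) \colonequals \sum_{l=0}^m \tau_l(\cdot)$ with the $L^2$-orthogonal projection of $y_0(\cdot)$ onto the closed linear span $\mathcal{H}_m \colonequals \overline{\text{span}}\{y_0(\knot): \knot \in \knots^{(m)}\}$, where $\knots^{(m)} \colonequals \bigcup_{l=0}^m \knots_l$. Because $y_0(\cdot)$ is a zero-mean Gaussian process, this projection agrees almost surely with the conditional expectation $\E(y_0(\cdot)\,|\,y_0(\knots^{(m)}))$, and the claimed distributional identity (indeed, a.s.\ equality) then drops out immediately.

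First, I would show that $s(\bs) \in \mathcal{H}_m$ for every $\bs \in \domain$. For $l=0$, Definition \ref{def:pp} writes $\tau_0(\bs)$ explicitly as a linear combination of $y_0(\knots_0)$. Proceeding by induction on $l$, the residual $\delta_l(\knots_l) = y_0(\knots_l) - \sum_{l' < l}\tau_{l'}(\knots_l)$ is a linear combination of $y_0(\knots^{(l)})$, so Definition \ref{def:pp} applied to $\delta_l(\cdot)$ at knots $\knots_l$ gives $\tau_l(\bs) = \ba_l(\bs)'\bfOmega_l^{-1}\delta_l(\knots_l) \in \mathcal{H}_l \subseteq \mathcal{H}_m$. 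Summing over $l$ yields the claim.

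Next, I would show that $y_0(\bs) - s(\bs) = \delta_{m+1}(\bs)$ is $L^2$-orthogonal to every $y_0(\knot)$ with $\knot \in \knots^{(m)}$. The crux is that $\delta_{m+1}(\knot) = 0$ almost surely for every $\knot \in \knots^{(m)}$, which I would obtain from the unmodulated analog of Proposition \ref{prop:knots}: for $\knot \in \knots_l$, $w_{l+k}(\knot,\bs) = 0$ for every $\bs \in \domain$ and every $k \geq 1$. This analog is proved by exactly the induction of Proposition \ref{prop:knots} (equivalently, specialize $\modu_m \equiv 1$), the base step canceling $w_l(\knot,\knots_l)w_l(\knots_l,\knots_l)^{-1}w_l(\knots_l,\bs)$ against $w_l(\knot,\bs)$ via Lemma \ref{lem:pp}. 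Since $\var(\delta_{m+1}(\knot)) = w_{m+1}(\knot,\knot) = 0$, we get $\delta_{m+1}(\knot) = 0$ a.s. Writing $y_0(\knot) = s(\knot) + \delta_{m+1}(\knot) = s(\knot)$ and using the standard orthogonality of the predictive-process residual (applied iteratively: $\delta_{m+1}$ is independent of each $\tau_0,\ldots,\tau_m$), we conclude $\cov(\delta_{m+1}(\bs), y_0(\knot)) = \cov(\delta_{m+1}(\bs), s(\knot)) + \cov(\delta_{m+1}(\bs), \delta_{m+1}(\knot)) = 0$.

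The main obstacle is the second step, specifically checking that $\delta_{m+1}$ vanishes at \emph{every} lower-resolution knot, not merely at knots in $\knots_m$; this is where the unmodulated version of Proposition \ref{prop:knots} is essential, since a knot $\knot \in \knots_{l}$ with $l < m$ gets ``touched'' by every subsequent predictive-process subtraction. Once this is in hand, $s(\bs) \in \mathcal{H}_m$ together with $y_0(\bs) - s(\bs) \perp \mathcal{H}_m$ uniquely identifies $s(\bs)$ as the orthogonal projection, hence a.s.\ as $\E(y_0(\bs)\,|\,y_0(\knots^{(m)}))$, and the two Gaussian processes are equal in distribution.
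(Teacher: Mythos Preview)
Your argument is correct and in fact yields almost-sure equality, not just equality in distribution. The route, however, is genuinely different from the paper's. You work in the Hilbert-space picture: show $s(\bs)\in\mathcal{H}_m$, then show the residual $\delta_{m+1}(\bs)$ is orthogonal to $\mathcal{H}_m$ by invoking the unmodulated analogue of Proposition~\ref{prop:knots} (so that $w_{m+1}(\cdot,\knot)\equiv 0$ and hence $\delta_{m+1}(\knot)=0$ a.s.\ for every $\knot\in\knots^{(m)}$), and conclude by uniqueness of the orthogonal projection. The paper instead runs a step-by-step induction at the level of conditional expectations: it establishes $\delta_1\indep y_0(\knots_0)$, proves the $\sigma$-algebra identity $\sigma(\delta_1(\knots_1),y_0(\knots_0))=\sigma(y_0(\knots_1),y_0(\knots_0))$, and deduces $E(\delta_1(\bs)\mid\delta_1(\knots_1))=E(y_0(\bs)\mid y_0(\knots_0\cup\knots_1))-E(y_0(\bs)\mid y_0(\knots_0))$, then telescopes and iterates. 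Your approach is cleaner geometrically and nicely reuses the machinery behind Proposition~\ref{prop:knots}; the paper's $\sigma$-algebra argument is self-contained in that it never needs the vanishing of $w_{m+1}$ at earlier knots, at the cost of a slightly more hands-on induction.
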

\begin{proof}[Proof of Lemma \ref{lem:sumpp}]
 For $m=1$, $ \delta_1(\bs) \indep y_0(\knots_0)$, for any $\bs \in \domain$, because $E\big(\delta_1(\bs)y_0(\knots_0)\big)=E \Big( \big(y_0(\bs)-E(y_0(\bs)|y_0(\knots_0))\big)y_0(\knots_0) \Big)=E\big(y_0(\knots_0)\big)E(\delta_1(\bs))=0$, and $y_0(\knots_0)$, $\delta_1(\bs)$ are jointly Gaussian. And we have $ E(y_0(\cdot)|\delta_1(\knots_1),y_0(\knots_0)) = E(y_0(\cdot)|y_0(\knots_1),y_0(\knots_0))$,
 because for the $\sigma$-algebras
\[
\sigma(\delta_1(\knots_1),y_0(\knots_0))= \sigma \big( y_0(\knots_1)- E \big(y_0(\knots_1)|y_0(\knots_0)\big),y_0(\knots_0)  \big)= \sigma \big(y_0(\knots_1),y_0(\knots_0) \big),
\]
since $\sigma \big( y_0(\knots_1)- E \big(y_0(\knots_1)|y_0(\knots_0)\big),y_0(\knots_0)  \big) = \sigma \big( y_0(\knots_1)-f\big( y_0(\knots_0))\big),y_0(\knots_0) \big) \subset \sigma \big(y_0(\knots_1),y_0(\knots_0) \big)$, and the opposite also holds. Therefore,
\begin{align*} 
E \big(\delta_1(\bs)|\delta_1(\knots_1) \big) &  =  E \big( \delta_1(\bs)|\delta_1(\knots_1),y_0(\knots_0) \big) \\
  & = E \big( y_0(\bs)|\delta_1(\knots_1),y_0(\knots_0) \big) - E \big( E \big( y_0(\bs)|y_0(\knots_0) \big) |\delta_1(\knots_1),y_0(\knots_0) \big)\\
  & = E \big( y_0(\bs)|y_0(\knots_1),y_0(\knots_0) \big) - E \big( y_0(\bs)|y_0(\knots_0) \big),
\end{align*}
And so,
\[
\tau_0(\bs) + \tau_1(\bs) =   E \big(y_0(\bs)|y_0(\knots_0) \big) + E \big( \delta_1(\bs)|\delta_1(\knots_1) \big) =  E(y_0(\bs)|y_0(\knots_1),y_0(\knots_0)).
\]
Then, $\delta_2(\bs)=y_0(\bs)-E \big(y_0(\bs)|y_0(\knots_0 \cup \knots_1)\big)$, which implies $y_0(\knots_0 \cup \knots_1) \indep \delta_2(\bs)$. Iteratively repeat this argument to obtain $\sum_{l=0}^{m}\tau_l(\bs)=E \big(y_0(\bs)|y_0(\cup^{m}_{l=0}\knots_l) \big)$.
\end{proof}

\begin{lemma}[Block-independence for exponential covariance]
\label{lem:expind}
Assume $y_0(\cdot) \sim \GP(0,C_0)$, where $C_0$ is an exponential covariance function on the real line, $\domain=\mathbb{R}$, and consider a domain partitioning as in \eqref{eq:partitioning} with $r_m=(J-1) J^m$ knots for $m=0,\ldots,M-1$, which are placed such that at each resolution $m$ a knot is located on each boundary between two subregions at resolution $m+1$.
Then, for any $m=1,\ldots,M$, if $\bs_i \in \domain_\im$ and $\bs_j \in \domain_\jm$, we have $w_m(\bs_i,\bs_j)=0$ (defined in \eqref{eq:decompdetails}) if $(\im) \neq (\jm)$.
\end{lemma}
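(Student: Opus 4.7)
The plan is to re-express $w_m$ as a conditional covariance and then exploit the Markov property of the exponential covariance on $\mathbb{R}$.

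First, I would invoke Lemma \ref{lem:sumpp} to rewrite the remainder process. Since $\delta_m(\cdot) = y_0(\cdot) - \sum_{l=0}^{m-1} \pp_l(\cdot)$ and Lemma \ref{lem:sumpp} tells us that $\sum_{l=0}^{m-1} \pp_l(\cdot) \eqd \E(y_0(\cdot) \,|\, y_0(\bigcup_{l=0}^{m-1}\knots_l))$, the remainder $\delta_m(\bs)$ is (in distribution) the kriging residual of $y_0(\bs)$ given the values at $\knots^{(m)} \colonequals \bigcup_{l=0}^{m-1}\knots_l$. In particular,
\[
 w_m(\bs_1,\bs_2) = \cov\bigl(y_0(\bs_1),\,y_0(\bs_2) \;\big|\; y_0(\knots^{(m)})\bigr), \qquad \bs_1,\bs_2 \in \domain.
\]

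Second, I would verify that $\knots^{(m)}$ contains every boundary point between the $J^m$ subregions at resolution $m$. By the recursive construction, splitting each resolution-$l$ subregion into $J$ pieces introduces exactly $(J-1)$ new boundaries, so resolutions $0,1,\ldots,m-1$ together contribute $\sum_{l=0}^{m-1}(J-1)J^l = J^m-1$ new boundaries. This matches the total number of boundary points among $J^m$ subregions on the line, so every such boundary is populated by a knot at some resolution $l<m$, and hence lies in $\knots^{(m)}$.

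Third, I would apply the Markov property of the exponential covariance: a centered Gaussian process on $\mathbb{R}$ with covariance $C_0(\bs_1,\bs_2)=\sigma^2\exp(-|\bs_1-\bs_2|/\kappa)$ is the stationary Ornstein--Uhlenbeck process, which is Markov, i.e.\ for any $\bs_i<q<\bs_j$, conditioning on $y_0(q)$ makes $y_0(\bs_i)$ and $y_0(\bs_j)$ independent. If $\bs_i \in \domain_\im$ and $\bs_j \in \domain_\jm$ with $(\im)\neq(\jm)$, then between $\bs_i$ and $\bs_j$ on the real line there is at least one subregion boundary at resolution $m$, which by Step~2 is a knot in $\knots^{(m)}$. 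Conditioning on that knot (as part of the larger conditioning set $y_0(\knots^{(m)})$) decouples $\bs_i$ from $\bs_j$, giving $\cov(y_0(\bs_i),y_0(\bs_j)\mid y_0(\knots^{(m)}))=0$ and hence $w_m(\bs_i,\bs_j)=0$.

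The main obstacle will be making the Markov/screening step fully rigorous: one should argue carefully that conditioning on a superset (all of $y_0(\knots^{(m)})$) preserves the conditional independence obtained from conditioning on the single separating knot, which is immediate for jointly Gaussian vectors via the tower/iterated-conditioning property. A small additional check is the counting argument in Step~2 (including the $m=1$ base case, where $\knots_0$ provides the $J-1$ resolution-$1$ boundaries); the rest is bookkeeping.
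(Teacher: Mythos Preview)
Your proposal is correct and follows essentially the same route as the paper: invoke Lemma~\ref{lem:sumpp} to identify $w_m$ with the conditional covariance of $y_0$ given $y_0(\cup_{l<m}\knots_l)$, observe that a knot from this union lies between $\bs_i$ and $\bs_j$, and then use the Markov property of the exponential (Ornstein--Uhlenbeck) process on $\mathbb{R}$ to conclude the conditional covariance vanishes. Your counting argument in Step~2 is a bit more explicit than the paper's (which simply asserts the existence of a separating knot), and your remark that the Gaussian conditional covariance is deterministic so that enlarging the conditioning set preserves the zero is exactly the justification the paper leaves implicit.
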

\begin{proof}[Proof of Lemma \ref{lem:expind}]
 For any $m=1,\ldots,M$, using Lemma \ref{lem:sumpp}, we have 
$$w_{m}(\bs_i,\bs_j)= C_0(\bs_i,\bs_j)-C_0(\bs_i,\knots^{m-1})C_0(\knots^{m-1},\knots^{m-1})^{-1}C_0(\knots^{m-1},\bs_j),$$ 
where $\knots^{m-1} \colonequals \cup^{m-1}_{l=0}\knots_l$. By the law of total covariance, 
\begin{align*}
w_m(\bs_i,\bs_j) & = C_0(\bs_i,\bs_j) -Cov \big( E \big(y_0(\bs_i)|y_0(\knots^{m-1}) \big),E\big(y_0(\bs_j)|y_0(\knots^{m-1})\big) \big)\\
 &= E \big(Cov \big(y_0(\bs_i),y_0(\bs_j)|y_0(\knots^{m-1})\big)\big).
\end{align*}
Because $(i_1,i_2,\ldots,i_{m-1}) \neq (j_1,j_2,\ldots,j_{m-1})$, there is a $\knot \in \knots^{m-1}$ that lies between $\bs_i$ and $\bs_j$. As $y_0(\cdot)$ is a Markov process \citep[e.g.,][Ch.~6]{Rasmussen2006}, $E \big( Cov \big(y_0(\bs_i),y_0(\bs_j)|y_0(\knots^{m-1})\big)\big)=  E \big( Cov \big(y_0(\bs_i),y_0(\bs_j)|y_0(\knot)\big)\big)= w_{m}(\bs_i,\bs_j)=0$.
\end{proof}

\begin{proof}[Proof of Proposition \ref{prop:expexact}]
Comparing the expression for $C_M$ in Lemma \ref{lem:covknot} to the expression for $C_0$ in \eqref{eq:c0decomp}, it is clear that $C_M(\bs_1,\bs_2)=C_0(\bs_1,\bs_2)$ if 
\begin{equation}
\label{eq:veqw}
v_m(\bs_i,\bs_j)=w_m(\bs_i,\bs_j), \quad \textnormal{for } m=0,\ldots,M \textnormal{ and any } \bs_i, \bs_j \in \domain.
\end{equation}
We now prove \eqref{eq:veqw} by induction. For $m=0$, we have $v_0(\bs_i,\bs_j)=C_0(\bs_i,\bs_j)\modu_0(\bs_i,\bs_j) = C_0(\bs_i,\bs_j)$, because $\modu_0(\bs_i,\bs_j) \equiv 1$ for the $M$-RA-block.
For $m>0$, assume that $v_{m-1}(\bs_i,\bs_j)=w_{m-1}(\bs_i,\bs_j)$. Then, we can write
\begin{equation}
\label{eq:vw2}
v_m(\bs_i,\bs_j)=w_m(\bs_i,\bs_j)\modu_m(\bs_i,\bs_j).
\end{equation}
Assume that $\bs_i \in \domain_\im$ and $\bs_j \in \domain_\jm$. Then, if $(\im) = (\jm)$, \eqref{eq:vw2} holds because $\modu_m(\bs_i,\bs_j)=1$. If $(\im) \neq (\jm)$, we have $\modu_m(\bs_i,\bs_j)=0$ but also $w_m(\bs_i,\bs_j) = 0$ by Lemma \ref{lem:expind}. This proves \eqref{eq:vw2}, which proves \eqref{eq:veqw}, which in turns proves Proposition \ref{prop:expexact}.
\end{proof}


\begin{proof}[Proof of Proposition \ref{prop:priorcomplexity}]
From \eqref{eq:priorw1}, we have $\bW_{m,l}^{k+1} = (\bW_{m,l}^{k} - \bX_{m,l}^k) \circ \modu_{k+1}(\knots_m,\knots_l)$, where $\bX_{m,l}^k \colonequals \bW_{m,k}^{k} \prec_{k}^{-1} \bW_{l,k}^{k}{}'$. The $(i,j)$th element of this matrix is
\begin{equation}
\label{eq:Vquad}
\textstyle (\bX_{m,l}^k)_{i,j} = \sum_{a,b=1}^{r_k} v_k(\knot_{m,i},\knot_{k,a})v_l(\knot_{l,j},\knot_{k,b}) (\prec_k^{-1})_{a,b},
\end{equation}
where $v_k(\knot_{m,i},\knot_{k,a}) =0$ if $\| \knot_{m,i} - \knot_{k,a} \| \geq d_k$, and $v_l(\knot_{l,j},\knot_{k,b})=0$ if $\| \knot_{l,j} - \knot_{k,b} \| \geq d_k$. Further, we only need the $(i,j)$th element of $\bW_{m,l}^{k+1}$ (and thus of $\bX_{m,l}^k$) if $(i,j) \in \mathcal{I}_{m,l}$, because $(\bW_{m,l}^l)_{i,j} =0$ if $\| \knot_{m,i} - \knot_{l,j} \| \geq d_l$. Hence, we only need $(\prec_k^{-1})_{a,b}$ if $\| \knot_{m,i} - \knot_{l,j} \| < d_l$, $\| \knot_{m,i} - \knot_{k,a} \| < d_k$, \emph{and} $\| \knot_{l,j} - \knot_{k,b} \| < d_k$, for some $m,l \in \{k+1,\ldots,M\}$. As $d_{k+1} = d_k/J > d_{k+2} > \ldots > d_M$, this means that do not need to calculate $(\prec_k^{-1})_{a,b}$ if $\|\knot_{k,a} - \knot_{k,b} \| \geq 2d_k + 2 d_{k+1} = (2+2/J)d_k$, and so we can replace $\prec_k^{-1}$ in $\bX_{m,l}^k$ by $\bS_k = \pprec_k^{-1} \circ \bG_k$.

Further, for each $(i,j) \in \mathcal{I}_{m,l}$, the time to compute \eqref{eq:Vquad} is $\order(r_0^2)$, because for any $\bs \in \domain$, the size of the set $\{\knot \in \knots_k: v_k(\bs,\knot) \neq 0 \}$ is $\order(r_0)$. As $\mathcal{I}_{m,l}$ is a set of size $\order(r_mr_0)$, the cost of computing $\bW_{m,l}^k$ for each $m,l,k$ is $\order(r_mr_0^3)$. Thus, the total computation time for $k=0,\ldots,l-1$, $l=0,\ldots,m$, and $m=0,\ldots,M$ is $\order(\sum_{m=0}^M \sum_{l=0}^m \sum_{k=0}^{l-1} r_m r_0^3) = \order(r_0^3 \sum_{m=0}^M r_m m^2) = \order( r_0^4 \sum_{m=0}^M J^m m^2) = \order(r_0^4 M^2 J^M) = \order(n M^2 r_0^3)$, because $n=\order(r_0J^M)$ and $\sum_{m=0}^M m^2 J^m \leq 2 M^2 J^M = \order(M J^M)$.
\end{proof}


\begin{proof}[Proof of Proposition \ref{prop:posteriornz}]
We have $(\pprec_{m,l})_{i,j} = 0$ if $\not\exists\, \bs \in \domain$ such that $\modu_m(\knot_{m,i},\bs) \neq 0$ and $\modu_l(\knot_{l,j},\bs)\neq 0$, or equivalently, if $\|\knot_{m,i} - \knot_{l,j}\| \geq d_m + d_l$. As $d_l = d_m J^{(l-m)/d}$, the $i$th row $(\pprec_{m,l})_{i,\cdot}$ has $\order(r_0 J^{(l-m)_+})$ nonzero elements, where $(x)_+ = x\mathbbm{1}_{\{x\geq0\}}$. The entire row of the matrix $\pprec$ corresponding to $\knot_{m,i}$ thus has $\order(r_0\sum_{l=0}^M J^{(l-m)_+}) = \order(r_0(m+J^{M-m}))$ nonzero elements. As there are $\order(r_0J^m)$ rows corresponding to resolution $m$, the total number of nonzero elements in $\pprec$ is $\order(\sum_{m=0}^M r_0J^m\cdot r_0(m+J^{M-m})) = \order(r_0^2(M J^M + \sum_{m=0}^M m J^m)) = \order(nMr_0)$, because $\sum_{m=0}^M m J^m \leq 2 M J^M = \order(M J^M)$ and $n=\order(r_0 J^M)$.
\end{proof}

\footnotesize
\bibliographystyle{apalike}
\bibliography{library}

\end{document}